\newtheorem{corollary}{Corollary}
\newtheorem{proposition}{Proposition}
\newtheorem{remark}{Remark}
\begin{document}

%
\title{Simultaneous Transmitting and Reflecting  Reconfigurable Intelligent Surfaces-Empowered NOMA Networks}
%
%
%

\author{Mahmoud~Aldababsa, Aymen~Khaleel,
	and~Ertugrul~Basar,~\IEEEmembership{Fellow~Member,~IEEE}\vspace{-0.2cm}
	\thanks{		
		Mahmoud Aldababsa is with the Department of Electrical and Electronics Engineering, Nisantasi University, 34481742, Istanbul, Turkey. (e-mail: mahmoud.aldababsa@nisantasi.edu.tr).}
	\thanks{Aymen Khaleel and Ertugrul Basar are with the Communications Research and Innovation Laboratory (CoreLab), Department of Electrical and Electronics Engineering, Koc University, Sariyer 34450, Istanbul, Turkey. (e-mail: akhaleel@ku.edu.tr, ebasar@ku.edu.tr).}
	\thanks{}
	\thanks{}}

\maketitle
\begin{abstract}
In this paper, we propose a novel simultaneous transmitting and reflecting reconfigurable intelligent surface (STAR-RIS) assisted non-orthogonal multiple access (NOMA) system. Unlike most of the STAR-RIS-assisted NOMA works, we target scalable phase shift design for our proposed system that requires a reduced channel estimation overhead. Within this perspective, we propose novel algorithms to partition the STAR-RIS surface among the available users. These algorithms aim to determine the proper number of transmitting/reflecting elements needs to be assigned to each user in order to maximize the system sum-rate while guaranteeing the quality-of-service requirements for individual users. For the proposed system, we derive closed-form analytical expressions for the outage probability (OP) and its corresponding asymptotic behavior under different user deployments. Finally, Monte Carlo simulations are performed in order to verify the correctness of the theoretical analysis. It is shown that the proposed system outperforms the classical NOMA and orthogonal multiple access (OMA) systems in terms of OP and sum-rate, under spatial correlation and phase errors.
\end{abstract}
\begin{IEEEkeywords}
Simultaneous transmitting and reflecting reconfigurable intelligent surface, non-orthogonal multiple access, mode switching protocol, outage probability.
\end{IEEEkeywords}
%
\IEEEpeerreviewmaketitle
\section{Introduction}
\IEEEPARstart{R}{econfigurable} intelligent surface (RIS)-empowered communication has been considered as a promising candidate to enhance the performance of future wireless networks in terms of energy efficiency and coverage \cite{liu2021}-\hspace{-0.01cm}\cite{sb2021}. Basically, an RIS consists of massive low-cost reconfigurable passive elements by which it can reconfigure the propagation of incident wireless signals by adjusting the amplitude and phase shift of each individual element \cite{wu2020}. RISs do not require radio frequency chains, rather, they process the impinging signal over-the-air. This remarkably reduces the energy consumption and hardware costs, and hence, makes RISs economical and environmentally friendly compared to multi-antenna and relaying systems \cite{eb2021}. Despite the aforementioned advantages, RISs act as reflective metasurfaces, which means that RISs can only serve the users located on their same side. In other words, RISs offer only half-space coverage, which limits the flexibility of deploying them. To solve this problem, recently, the novel concept of simultaneous transmitting and reflecting RISs (STAR-RISs) has been proposed \cite{yua2021}. Compared to conventional RISs, STAR-RISs have elements that support both electric polarization and magnetization currents yielding simultaneous control of the transmitted and reflected signals. In this way, STAR-RISs can offer full-space coverage. Furthermore, in order to enhance this full-space coverage, three main operating protocols are proposed, namely the energy splitting (ES), mode switching (MS), and time switching (TS). Specifically, for ES, all elements of the STAR-RIS are assumed to operate in transmission and reflection mode simultaneously. For MS, each element can operate in full transmission or reflection mode, while in TS, all elements periodically switch between the transmission and reflection modes in orthogonal time slots.

Non-orthogonal multiple access (NOMA) has also received significant attention due to its ability to realize high spectral efficiency (SE), massive connectivity, and low latency \cite{yau2021}. NOMA is fundamentally different than conventional OMA schemes which provide orthogonal access to the users either in time, frequency, code, or space. The key idea in NOMA is to allocate non-orthogonal resources to serve multiple users, yielding a higher SE while allowing some degree of interference at receivers \cite{mah2018}. In power-domain NOMA (PD-NOMA), multiple users' signals are superposed with different power levels that are in reverse order to their channels' gains. At the receiver side, successive interference cancellation (SIC) is applied by each user to recover its signal, providing a good trade-off between system throughput and user fairness \cite{noma2017}.
\subsection{Related Works}
\subsubsection{Studies on STAR-RIS}
Since the concept of STAR-RIS is  relatively new, only a few works in the literature have investigated it so far \cite{star22021}-\hspace{-0.01cm}\cite{star112021}. Particularly, in \cite{star22021}, a general hardware model for STAR-RISs was introduced. The channel models for the near- and far-field regions of STAR-RISs were also investigated and the analytical expressions for the channel gains of users were derived in closed-form. Next, the active and passive beamforming optimization problem was considered in \cite{star42021} for the STAR-RIS-assisted downlink networks in both unicast and multicast transmission cases. Especially, for three practical operating protocols ES, MS, and TS, the active beamforming at the base station (BS) and the passive transmission/reflection beamforming at the STAR-RIS were jointly optimized in order to minimize the power consumption of the BS and satisfy the quality-of-service (QoS) requirements of the users. Then, the weighted sum-rate maximization problem of the STAR-RIS-assisted multiple-input multiple-output (MIMO) networks was studied in \cite{star72021}. Here, the authors proposed an alternative block coordinate descent algorithm to design the precoding matrices and the transmitting/reflecting coefficients. In \cite{star112021}, practical hardware implementations and their accurate physical models were investigated for STAR-intelligent omni surfaces (STAR-IOSs).

\subsubsection{Studies on STAR-RIS-assisted NOMA}
Due to their remarkable advantages, the integration of RISs with NOMA systems is indispensable in future wireless communications to fulfill the stringent demands on data rate and connectivity \cite{yau2021}. Motivated by this, recent research efforts have been devoted to STAR-RIS-assisted NOMA systems \cite{star32021}-\hspace{-0.01cm}\cite{starnoma++++++}. Particularly, the authors in \cite{star32021} maximized the coverage range of STAR-RIS aided two-user communication networks for both NOMA and OMA by jointly optimizing the resource allocation at the BS and the transmission/reflection coefficients at the STAR-RISs. Next, the authors in \cite{star52021} exploited a STAR-RIS in NOMA enhanced coordinated multi-point transmission networks in order to eliminate and boost the inter-cell interferences and desired signals, respectively. Then, in \cite{star62021}, by jointly optimizing the decoding order, power allocation coefficients, active beamforming, and transmission/reflection beamforming, the achievable sum-rate was maximized for the STAR-RIS-assisted NOMA system. The authors in \cite{star82021} used a STAR-RIS to enable a heterogeneous network that integrated uplink NOMA and over-the-air federated learning into a unified framework to address the scarcity of system bandwidth. In addition, they minimized the optimality gap while guaranteeing QoS requirements by jointly optimizing the transmit power at users and the configuration mode at the STAR-RIS. For the sake of deriving approximated mathematical channel models for STAR-IOS/RISs, the authors exploited the central limit, curve fitting and M-fold convolution models in \cite{star92021} while central limit theorem (CLT) and channel power gain models are used in \cite{star102021}. In \cite{Correlated-TR} and \cite{Coupled-PS}, the authors investigated the correlation effects associated with phase adjustment of the transmitting and reflecting elements, as follows. In \cite{Correlated-TR}, the authors proposed three phase-shift configuration strategies to achieve full diversity and enhance communication performance. In \cite{Coupled-PS}, the authors considered an optimization problem to minimize the power consumption while satisfying the users' rate requirements. The performance of STAR-RIS-assisted NOMA networks under Rician fading channels is investigated in \cite{STAR-Performance}, in terms of OP and ergodic rate. By using deep reinforcement learning, the authors in \cite{E-Efficiency} considered an optimization problem to maximize the energy efficiency. In \cite{Secrecy}, the authors investigated the secrecy design of STAR-RIS-assisted NOMA networks under two scenarios where full or statistical channel state information (CSI) of the eavesdropper is available at the legitimate nodes. For multi-carrier communications networks, the authors in \cite{R-Allocation} investigated the resource allocation problem in STAR-RIS-assisted OMA and NOMA networks by considering the sum-rate maximization problem in both scenarios. Finally, in \cite{Uplink}, the authors considered STAR-RIS-assisted NOMA networks where the total power consumption is minimized by jointly optimizing the users' transmit power, receive-beamforming vectors at the STAR-RIS, and time slots. The authors in \cite{star-ber} derived the bit error rate in a closed-form for a STAR-RIS-assisted NOMA network under perfect and imperfect SIC scenarios. The authors in \cite{starnoma+} minimized the total transmit power for STAR-RIS-empowered uplink NOMA systems. In \cite{starnoma++} and \cite{starnoma++A}, the authors investigated the ergodic rate for STAR-RIS-NOMA networks. The authors in \cite{starnoma+++} obtained analytical expressions of the effective capacity for the network with a pair of NOMA users on the different sides of the STAR-RIS.  
 
\subsubsection{Studies on partitioning algorithms}
In \cite{part1}, the authors introduced a general framework based on RIS partitioning for the channel estimation for large intelligent metasurface assisted massive MIMO networks. In \cite{part3}, the author brought the concept of RIS-assisted communications to the realm of IM by partitioning the RIS surfaces and embedding bits in the indices of these partitions. In \cite{part4}, the whole RIS was virtually partitioned into two halves to create signals with only in-phase and quadrature components, respectively, and each half forms a beam to a receive antenna whose index carries the bit information. In \cite{RISpartitioning}, the authors formulated an RIS partitioning optimization problem to slice the RIS elements between the users such that the user fairness is maximized.
\vspace{-0.4cm}
\subsection{Motivation and Contributions} 
Considering the phase shift design in the previously mentioned STAR-RIS-assisted NOMA systems, we note the following. Since the phase shifts are jointly designed for all users, the BS-RIS and RIS-user CSI of all users needs to be available at the BS side. Depending on the considered channel estimation method, $KN$ pilot signals might be required to obtain the RIS-user CSI of all users \cite{ch-est}, where $N$ and $K$ are the numbers of STAR-RIS elements and users, respectively, considering a single-input single-output (SISO) multi-user system. Consequently, this approach of phase shift design incurs a huge channel estimation overhead that adversely affects the overall system performance in terms of the data rate. Furthermore, considering the joint optimization of the phase shift design, power allocation at the BS, SIC decoding order, and QoS constraints for all users, the solution for this kind of optimization problem is non-scalable when any of these decision variables changes. Also, due to the heavy computations associated with solving such optimization problems, it is practically challenging to update the phase shift design in real-time.

Against this background and motivated by the work in \cite{RISpartitioning}, we propose a reduced channel estimation overhead and scalable phase shift design solution for a STAR-RIS-assisted NOMA system by partitioning the STAR-RIS among the available users to maximize the overall sum-rate while guaranteeing their individual QoS. In the proposed system, the BS communicates multiple users with the assistance of a STAR-RIS. The users are assumed to be randomly deployed on the transmission and reflection sides of the STAR-RIS. According to the STAR-RIS MS protocol\footnote{
The MS mode is more practical due to its ease of implementation compared to the other modes. In particular, the ES mode has a larger number of variables that need to be jointly optimized, while the TS mode has a high hardware implementation complexity \cite{yua2021}.}, each element of the STAR-RIS can operate in transmission or reflection mode. Therefore, in order to serve the users on both sides of the STAR-RIS, we extend the partitioning algorithms proposed in \cite{RISpartitioning} to the STAR-RIS case to partition the surface among users over two stages. First, the RIS surface is partitioned into two parts, transmission and reflection parts, which contain all the elements that operate in transmission and reflection modes, respectively. Second, each STAR-RIS part is partitioned into subsurfaces, where each subsurface is allocated to serve a specific user located on the side of that part. Note that, unlike our proposed system, \cite{RISpartitioning} suffers from strong interference from the direct link due to its transmission mechanism. Furthermore, it is limited to phase shift keying (PSK) modulation. Finally, it requires a high synchronization overhead between the transmission at the BS and the remodulation process at the STAR-RIS side. The main contributions of the paper can be summarized as follows:
\begin{itemize}
	\item We introduce a scalable phase shift design scheme for a STAR-RIS-assisted NOMA system with reduced channel estimation overhead by efficiently partitioning the STAR-RIS surface among users.
	\item We propose two novel algorithms to partition the STAR-RIS surface among users over two stages to maximize the system sum-rate while fulfilling the different QoS requirements for different users. 
	\item We derive exact and asymptotic expressions of the OP for the STAR-RIS aided NOMA network under different user deployments. 
	\item Using comprehensive computer simulations, we verify our theoretical results and show the superiority of the proposed STAR-RIS-assisted NOMA system over the classical NOMA and OMA systems in terms of OP and sum-rate, under spatial correlation and phase errors.
\end{itemize} 
\vspace{-0.3cm}
\subsection{Paper Organization and Notations }
The organization of the paper is given as follows: In Section \ref{sec:2}, the system model of the STAR-RIS-assisted NOMA network is described. The OP analysis is conducted in Section \ref{sec:3}. In Section \ref{sec:4}, the STAR-RIS partitioning algorithm is presented. In Section \ref{sec:5}, the numerical results are demonstrated and in Section \ref{sec:6}, the paper is concluded\footnote{{\it Notation}: Matrices and vectors are denoted by an upper and lower
case boldface letters, respectively. $\mathbf{X}\in \mathbb{C}^{m\times n}$ denotes a matrix $\mathbf{X}$ with $m\times n$ size. $\lfloor\mathbf{x}\rfloor_n$ denotes the $n$th entry of a vector $ \mathbf{x} $. $|\cdot|$, $\text{E}[\cdot]$, $ \lceil\cdot\rceil$ and $(\cdot)^{H}$ represent the absolute value, expectation operator, ceiling operator and Hermitian transpose, respectively. $ P_{r}(\cdot) $ symbolizes probability. $F_{X}\left(x\right)$ denotes cumulative distribution function (CDF) of a random variable $ X $. $ \mathcal{CN}(\mu,\sigma^{2})$ stands for the complex Gaussian distribution with mean $\mu$ and variance $\sigma^{2}$. $Q_{m}\left(\cdot,\cdot\right)$ refers to the $m$th order generalized Marcum Q-function.}. 
\vspace{-0.1cm}
\section{STAR-RIS NOMA Network: System Model}
\label{sec:2}
\begin{figure}[t]
	\includegraphics[width=45mm]{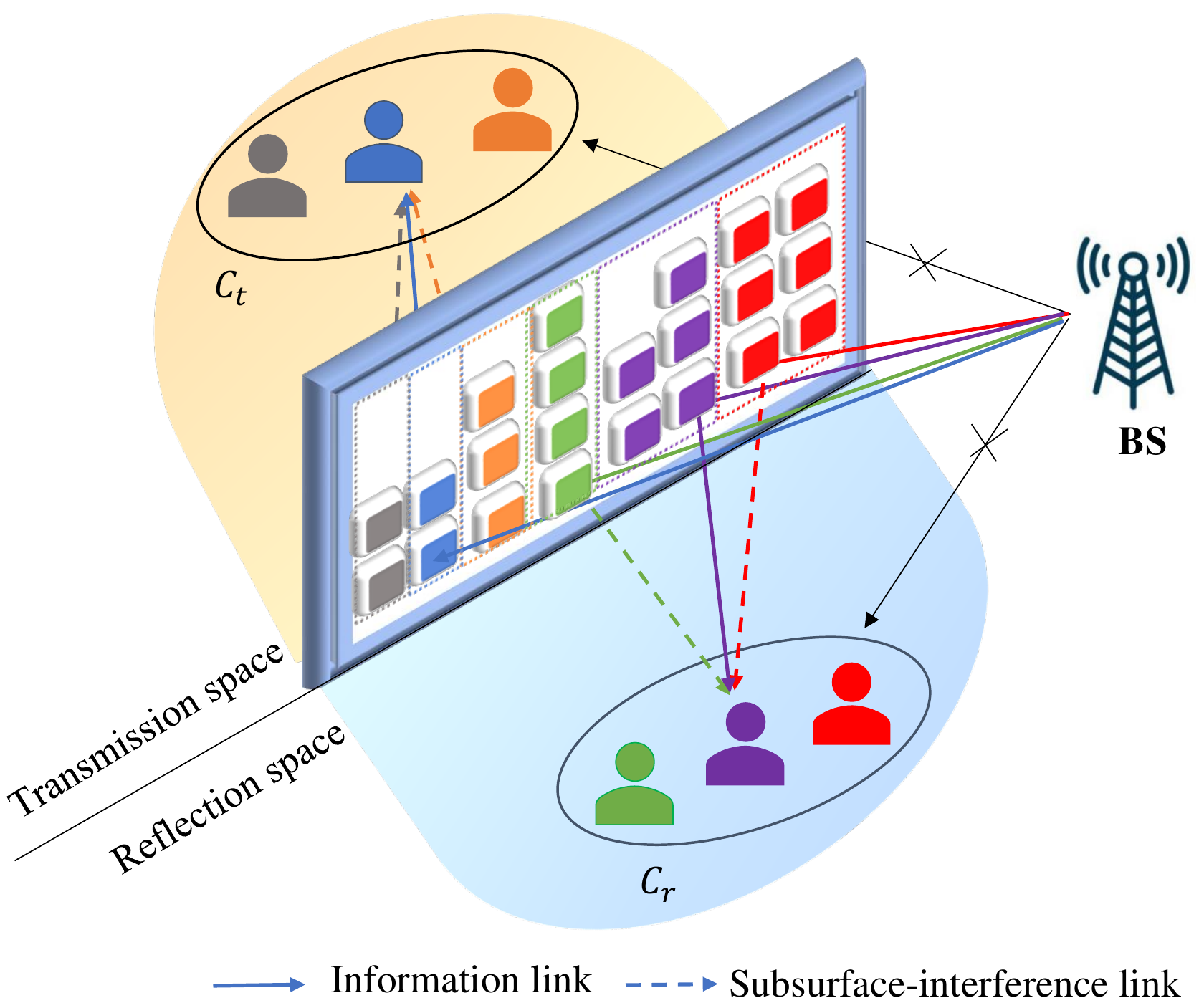}
	\centering
	\caption{A STAR-RIS aided multi-user downlink communication system model.}	
	\label{System model}
	\vspace{-0.7cm}
\end{figure}
Consider a downlink STAR-RIS-assisted multi-user wireless network where a single-antenna BS and an RIS with $N$ elements serve $K$ single-antenna users by using PD-NOMA, as shown in Fig. \ref{System model}. The users are grouped into two main clusters, $\text{C}_t$ and $\text{C}_r$, which contain $K_{t}$ users located in the transmission region and $K_{r}$ users located in the reflection region of the RIS, respectively. In order to reduce the channel estimation overhead associated with the STAR-RIS phase shift design problem and to make the solution for this problem scalable, we partition the STAR-RIS surface among users as follows. To serve the users in both clusters, the RIS is partitioned into two main parts, where the first and second parts contain $N_t$ and $N_r$ elements to serve the users in $\text{C}_t$ and $\text{C}_r$, respectively. The elements in the first and second parts are operated in the transmission ($t$) and reflection ($r$) modes, respectively, and the whole RIS is assumed to be deployed in the far-field of the BS. Furthermore, the transmission and reflection parts of the RIS are partitioned into subsurfaces, where each subsurface is allocated to serve a specific user in the transmission or reflection region. The BS-$U_k$ direct link is assumed to be blocked by obstacles, where $U_k$ stands for the user $k$, $k\in \mathbb{K}_{\chi}$, note that $\chi\in\{t,r\}$, $\mathbb{K}_t=\left\{1, ..., K_{t}\right\}$ and $\mathbb{K}_r=\left\{K_{t} + 1, ..., K_{t} + K_{r}\right\}$. The BS-RIS link is assumed to have a Rayleigh fading channel model, where $\mathbf{h}^i\in \mathbb{C}^{N^{i}_{\chi}\times1}$ denotes the BS-$i$th subsurface channel vector, $ N^{i}_{\chi} $ refers to the number of elements that belong to the $i$th subsurface in the transmission or reflection part of the RIS, and $\lfloor\mathbf{h}^i\rfloor_n=h^{i,n}$. Here, $h^{i,n}=\sqrt{L_{BS}}\zeta^{i,n}_{\chi}e^{-j\phi^{i,n}_{\chi}}$, where $L_{BS}$, $\zeta^{i,n}_{\chi}$, and $\phi^{i,n}_{\chi}$ denote the path gain, channel amplitude, and channel phase, respectively. Note that $ \zeta^{i,n}_{\chi} $ is independently Rayleigh distributed random variable (RV) with a mean $\mathrm{E}\left[\zeta^{i,n}_{\chi}\right]=\frac{\sqrt{\pi}}{2}$, a variance $\mathrm{VAR}\left[\zeta^{i,n}_{\chi}\right]=1-\frac{\pi}{4}$, and $\mathrm{E}\left[\left(\zeta^{i,n}_{\chi}\right)^{2}\right]= 1$. Hence, $h^{i,n}$ can be modelled as complex Gaussian RV with zero mean and a variance $L_{BS}$, i.e., $h^{i,n}\sim \mathcal{CN}(0,L_{BS})$. Likewise, the RIS-$\text{U}_{k}$ link is assumed to have Rayleigh fading channel model, where the channel vector between the $i$th subsurface in the transmission or reflection part of the RIS and $U_{k}$ is denoted by $\mathbf{g}^{i}_{\chi,k}$, where $\lfloor\mathbf{g}^{i}_{\chi,k}\rfloor_n=g^{i,n}_{\chi,k}$ and $\chi\in\{t,r\}$. Here, $g^{i,n}_{\chi,k} = \sqrt{L_{SU,\chi,k}}\eta^{i,n}_{\chi,k}e^{-j\varphi^{i,n}_{\chi,k}}$, where $L_{SU,\chi,k}$, $\eta^{i,n}_{\chi,k}$, and $\varphi^{i,n}_{\chi,k}$ denote the path gain, channel amplitude, and channel phase, respectively. Note that $ \eta^{i,n}_{\chi,k} $ is independently Rayleigh distributed RV with a mean $\mathrm{E}\left[\eta^{i,n}_{\chi,k}\right]=\frac{\sqrt{\pi}}{2}$, a variance $\mathrm{VAR}\left[\eta^{i,n}_{\chi,k}\right]=1-\frac{\pi}{4}$, and $\mathrm{E}\left[\left(\eta^{i,n}_{\chi,k}\right)^{2}\right]= 1$. Hence, $g^{i,n}_{\chi,k}$ can be modelled as complex Gaussian RV with zero mean and a variance $L_{SU,\chi,k}$, i.e., $g^{i,n}_{\chi,k}\sim \mathcal{CN}(0,L_{SU,\chi,k})$. All the considered channel coefficients are assumed to be perfectly available at the BS side. Although it is challenging to obtain CSI in RIS-assisted systems, yet, we assume the use of one of the methods proposed in the literature \cite{ch-est,ch-est1,ch-est2}. The transmission/reflection coefficients for the $i$th subsurface in the transmission or reflection part of the RIS are denoted by the entries of the diagonal matrix $\mathbf{\Theta}^{i}_{\chi}\in \mathbb{C}^{N^{i}_{\chi}\times N^{i}_{\chi}}$, for the $n$th element we have $\lfloor\mathbf{\Theta}^{i}_{\chi}\rfloor_n=\beta^{i,n}_{\chi}e^{j\theta^{i,n}_{\chi}}$, where $\theta^{i,n}_{\chi}\in[0,2\pi)$ and $\beta^{i,n}_{\chi}=1$, under the full transmission/reflection assumption.
\vspace{-0.1cm}

\indent Considering PD-NOMA, the BS transmits the signal $x$, which superimposes the $K_t$ and $K_r$ users' symbols, as follows:
\vspace{-0.1cm}
\setlength\abovedisplayskip{3pt}
\setlength\belowdisplayskip{3pt}
\begin{align} 
	\label{superimposedsignal}
	x &=\sum_{k = 1}^{K}\sqrt{a_{k}P}x_{k} \nonumber\\
	\
	&= \underbrace{\sum_{k = 1 }^{K_{t}}\sqrt{a_{k}P}x_{k}}_{\text{$K_t$ users' symbols}} + \underbrace{\sum_{k = K_{t} + 1 }^{K_{t} + K_{r}}\sqrt{a_{k}P}x_{k}}_{\text{$K_r$ users' symbols}},
\end{align}
where $P$ is the BS transmit power, $x_k$ and $a_k$ are the user $k$'s symbol and power allocation factor, respectively, ${\text{E}}\left[|x_{k}|^{2}\right]=1$, $\sum_{k = 1}^{K}a_{k} = 1$. The users are ordered according to their channel gains as $\{U_1, ...,  U_{K_t}, U_{K_t+1}, ...,  U_{K_t+K_r}\}$, where $U_1$ and $U_{K_t+K_r}$ are the users with the weakest and strongest channel gains, respectively. Thus, the power allocation coefficients are ordered as  $a_{1}\ge...\ge a_{k}\ge...\ge a_{K}$. By using the SIC process, each user first decodes all strongest signals' users and subtracts them from the received signal, and finally, it decodes its own message. Specifically, the $U_{k}$ cancels the interfered
signals from $U_{1}, ..., U_{k-1}$ and treats $U_{k+1}, ..., U_{K}$ as interference. Accordingly, considering successful SIC process at the $U_{k}$, the received signal can be expressed as
\begin{align} 
	\label{receivedsignal}
	y_{k} &= \underbrace{\left(\mathbf{g}^{k}_{\chi,k}\right)^{T}\hspace{-0.1cm}\mathbf{\Theta}^{k}_{\chi}\mathbf{h}^{k}_{\chi}\sqrt{a_{k}P}x_{k}}_{\text{Desired signal term}}+\underbrace{\left(\mathbf{g}^{k}_{\chi,k}\right)^{T}\hspace{-0.1cm}\mathbf{\Theta}^{k}_{\chi}\mathbf{h}^{k}_{\chi}\sum_{l= k+1}^{K}\sqrt{a_{l}P}x_{l}}_{\text{User-interference term}}\nonumber\\
\
& + \underbrace{\sum_{\underset{i\ne k}{i \in\mathbb{K}_{\chi}}}^{}\left(\mathbf{g}^{i}_{\chi,k}\right)^{T}\mathbf{\Theta}^{i}_{\chi}\mathbf{h}^{i}_{\chi}x}_{\text{Subsurface-interference term}} + n_{k},
\end{align}
where the first three terms in \eqref{receivedsignal} denote the desired signal of the $U_{k}$, the interference of other users, and the subsurfaces mutual interference within the transmission or reflection part of the RIS, respectively, while ${n}_{k}$ denotes the complex additive white Gaussian noise (AWGN) sample with zero mean and variance $\sigma^{2}_{k}$ at the $U_{k}$, ${n}_{k}\sim \mathcal{CN}(0,\sigma^{2}_{k})$.
\vspace{-0.0cm}

By letting $r^{k}_{\chi,k} = \left(\mathbf{g}^{k}_{\chi,k}\right)^{T}\mathbf{\Theta}^{k}_{\chi}\mathbf{h}^{k}_{\chi}$, $r^{i}_{\chi,k} = \left(\mathbf{g}^{i}_{\chi,k}\right)^{T}\mathbf{\Theta}^{i}_{\chi}\mathbf{h}^{i}_{\chi}$, $\sigma^2_{k}=\sigma^2$, and defining $\rho = \frac{P}{\sigma^{2}}$ is the transmit SNR, then the signal-to-interference-plus-noise ratio (SINR) for user $k$ to detect user $j$'s signal  $\gamma^{j}_{{k}}$, $\left\{j\le k, j\ne K\right\}$, can be obtained from \eqref{receivedsignal} as
\vspace{-0.0cm}
\begin{align}
	\label{SINR}
	\gamma^{j}_{{k}} &= \frac{\rho\Big|r^{k}_{\chi,k}\Big|^{2}a_{j}}{\rho\Big|r^{k}_{\chi,k}\Big|^{2}\sum_{l= j +1 }^{K}a_{l} + \rho\Big|\sum_{\underset{i\ne k}{i \in\mathbb{K}_{\chi}}}^{}r^{i}_{\chi,k}\Big|^{2} + 1}.
\end{align}
The phase shift design is considered for each user independent from the others. In this way, each subsurface $i$ tries to maximize $\gamma^{j}_{{k}}$ by adjusting the phase shifts of its elements to remove the overall BS-STAR-RIS-$\text{U}_k$ channel phases,
as follows:
\vspace{-0.3cm}
\begin{align}
\theta^{k,n}_{\chi}=\phi^{k,n}_{\chi} + \varphi^{k,n}_{\chi,k}.\label{eq:phs-adjust}
\end{align}
Accordingly, changing the CSI of any user and/or the number of users does not affect the phase adjustment of the other users. On the other side, in the joint phase shift design approach adopted in most of the STAR-RIS NOMA works in the literature (for example, \cite{star82021}, \cite{star92021}), the change in the CSI of even a single individual element requires the redesign of the phase shifts of all the remaining elements in the whole STAR-RIS surface. This gives the proposed partitioning scheme scalability in terms of the phase shift design at the expense of some performance loss compared to the joint phase shift design approach.

\vspace{-0.1cm}
By considering \eqref{eq:phs-adjust}$, \big|r^{k}_{\chi,k}\big|^{2} $ and $ \big|\sum_{\underset{i\ne k}{i \in\mathbb{K}_{\chi}}}^{}r^{i}_{\chi,k}\big|^{2} $ can be re-expressed, respectively, as
\vspace{-0.3cm}
\begin{align}
	\label{rk}
	\Big|r^{k}_{\chi,k}\Big|^{2}&=\bigg|\sqrt{L_{k}} \sum_{n = 1}^{N^{k}_{\chi}} \zeta^{k,n}_{\chi}\eta^{k,n}_{\chi,k}\bigg|^{2},\\ \Big|\sum_{\underset{i\ne k}{i \in\mathbb{K}_{\chi}}}^{}r^{i}_{\chi,k}\Big|^{2}&= \bigg|\sqrt{L_{k}}\sum_{\underset{i\ne k}{i \in\mathbb{K}_{\chi}}}^{} \sum_{n = 1}^{N^{i}_{\chi}} \zeta^{i,n}_{\chi}\eta^{i,n}_{\chi,k}e^{j\Phi^{i,n}_{\chi,k}} \bigg|^{2},
\end{align}
where $\Phi^{i,n}_{\chi,k} = \theta^{i,n}_{\chi} -\phi^{i,n}_{\chi}- \varphi^{i,n}_{\chi,k}$, $L_{k}={L_{BS}L_{SU,\chi,k}}$ is the overall path gain of the BS-STAR-RIS-$U_{k}$ link, which can be expressed as $L_{k}=  \rho^{2}_0/(d_{BS}^{\alpha_{BS}}d_{SU,\chi,k}^{\alpha_{SU}})$. Here, $d_{BS}$ is the distance between the BS and STAR-RIS and $d_{SU,\chi,k}$ is the distance between the STAR-RIS and the $U_{k}$ in the transmission or reflection group. Additionally, $\rho_0$, $\alpha_{BS}$, and $\alpha_{SU}$ are the path gain at a reference distance of 1 meter, the path loss exponents associated with the BS-STAR-RIS and the STAR-RIS-$U_k$ links, respectively.

By considering SIC, the SINR expression for user $k$, $k>1$, after successfully detecting and subtracting the $k-1$ users' signals before it, can be obtained as
\begin{align}
	\label{SINR-k}
	\gamma_{{k}} &= \frac{\rho\Big|r^{k}_{\chi,k}\Big|^{2}a_{k}}{\rho\Big|r^{k}_{\chi,k}\Big|^{2}\sum_{l= k +1 }^{K}a_{l} + \rho\Big|\sum_{\underset{i\ne k}{i \in\mathbb{K}_{\chi}}}^{}r^{i}_{\chi,k}\Big|^{2} + 1}.
\end{align}
For the $U_{k}$, we have
\vspace{-0.2cm}
\begin{align}
	\label{SINR-K}
	\gamma_{{K}} &= \frac{\rho\Big|r^{K}_{\chi,K}\Big|^{2}a_{K}}{ \rho\Big|\sum_{\underset{i\ne K}{i \in\mathbb{K}_{\chi}}}^{}r^{i}_{\chi,K}\Big|^{2} + 1}.
\end{align}
\vspace{-0.1cm}
Considering the special case of $\left(K_{t}, K_{r}\right) = \left(1, 1\right)$, where one user is located in the transmission region and the other user is located in the reflection region, the SINR of the $U_{1}$ to detect its own signal is given by
\vspace{-0.3cm}
\begin{align}
	\label{SINR-11}
	\gamma_{{1}} &= \frac{\rho\Big|r^{1}_{t,1}\Big|^{2}a_{1}}{\rho\Big|r^{1}_{t,1}\Big|^{2}a_{2} +  1},
\end{align}
and the SINR of the $U_{2}$ to detect the $U_{1}$'s signal is given by
\begin{align}
	\label{SINR-1}
	\gamma^{1}_{{2}} &= \frac{\rho\Big|r^{2}_{r,2}\Big|^{2}a_{1}}{\rho\Big|r^{2}_{r,2}\Big|^{2}a_{2} +  1},
\end{align}
finally, the SNR of the $U_{2}$ to detect its own signal is given by
\begin{align}
	\label{SINR-K-1}
	\gamma_{2} &= \rho\Big|r^{2}_{r,2}\Big|^{2}a_{2}.
\end{align}
\vspace{-0.5cm}
\section{OP Analysis}
\label{sec:3}
 In order to reveal the benefits of the proposed downlink STAR-RIS-assisted NOMA system, we examine its performance analytically in terms of OP under different user deployments, as follows.

The OP of user $k$ is the probability that the user cannot successfully detect and remove the signals of the $k-1$ users before it and/or cannot detect its own signal. Let $\gamma^{j}_{th}$ denote the SINR threshold for user $k$ to detect user $j$'s signal, $ 1 \le j \le k $, thus, $E_{k\leftarrow j}=\left\{\gamma^{j}_{{k}}<\gamma^{j}_{th}\right\} $ denotes the event that user $k$ cannot detect user $j$'s signal. Thus, the OP for user $k$ can be obtained as
\begin{align}
	\label{OP}
	OP_{k}= 1 - P_{r}\left( E^c_{k\leftarrow 1}\cap...\cap E^c_{k\leftarrow k}\right),
\end{align}
where $E^c_{k\leftarrow j}$ is the complement of the event  $E_{k\leftarrow j}$.  

\begin{proposition} The OP of the $U_{k}$ is given by
\begin{align}
	\label{OP_11} 
	OP_{k}\left(\varrho^{*}_{k}\right)&=1 - Q_{\frac{1}{2}}\left(\frac{\mu_{k}}{\nu_{k}},\frac{\sqrt{\varrho^{*}_{k}}}{\nu_{k}}\right)\nonumber\\
	\
	& + \left(\frac{u^{2}_{k}}{\nu^{2}_{k} + u^{2}_{k}}\right)^{\frac{1}{2}}e^{\frac{\varrho^{*}_{k}}{2u^{2}_{k}}}e^{-\frac{\mu^{2}_{k}}{2\left(\nu^{2}_{k} + u^{2}_{k}\right)}}\nonumber\\
	\
	&\times Q_{\frac{1}{2}}\left(\frac{\mu_{k}}{\nu_{k}}\sqrt{\frac{u^{2}_{k}}{\nu^{2}_{k} + u^{2}_{k}}},\sqrt{\frac{\varrho^{*}_{k}\left(\nu^{2}_{k} + u^{2}_{k}\right)}{\nu^{2}_{k}u^{2}_{k}}}\right), 
\end{align}
where $\varrho^{*}_{k}= \underset{j = 1, ..., k}{\max}\left\{\varrho_{j}\right\}$, $\varrho_{j} =  \frac{\gamma^{j}_{th}}{\rho a_{j}-\rho\gamma^{j}_{th}\sum_{j}}$, $\mu_{k} = \frac{\pi}{4}\sqrt{L_{k}}N^{k}_{\chi}$, $\sum_{j}= \sum_{l= j +1 }^{K}a_{l}$, $\nu_{k} = \sqrt{\left(1-\frac{\pi^{2}}{16}\right)L_{k}N^{k}_{\chi}}$ and $u_{k} = \sqrt{0.5\rho\varrho^{*}_{k}L_{k}\left(N_{\chi}-N^{k}_{\chi}\right)}$.
\end{proposition}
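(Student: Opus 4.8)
The plan is to collapse the $k$ simultaneous detection constraints in \eqref{OP} into a single inequality involving only two random variables, identify their distributions via the CLT, and then evaluate the outage probability as one Marcum-$Q$ integral against an exponential density.

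\emph{Reduction.} First I would rewrite each complementary event in \eqref{OP}. Setting $X=\big|r^{k}_{\chi,k}\big|^{2}$ and $Y=\big|\sum_{i\ne k}r^{i}_{\chi,k}\big|^{2}$, the condition $\gamma^{j}_{k}\ge\gamma^{j}_{th}$ from \eqref{SINR} rearranges (for $\rho a_{j}-\rho\gamma^{j}_{th}\sum_{j}>0$) into $X\ge\varrho_{j}\left(\rho Y+1\right)$, with $\varrho_{j}$ exactly as in the statement. Since every event $E^{c}_{k\leftarrow j}$, $j=1,\dots,k$, involves the \emph{same} pair $(X,Y)$ and differs only through the constant $\varrho_{j}$, their intersection is the single event $\{X\ge\varrho^{*}_{k}(\rho Y+1)\}$ with $\varrho^{*}_{k}=\max_{j}\varrho_{j}$. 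Hence $OP_{k}=P_{r}\!\left(X<\varrho^{*}_{k}(\rho Y+1)\right)$.

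\emph{Distributions.} Next I would characterize $X$ and $Y$ from \eqref{rk}. Because the alignment \eqref{eq:phs-adjust} removes all phases in $r^{k}_{\chi,k}$, $\sqrt{X}$ equals $\sqrt{L_{k}}$ times a sum of $N^{k}_{\chi}$ i.i.d. products $\zeta^{k,n}_{\chi}\eta^{k,n}_{\chi,k}$, each with mean $\pi/4$ and unit second moment; by the CLT, $\sqrt{X}$ is approximately $\mathcal{N}(\mu_{k},\nu^{2}_{k})$, so its folded-normal tail equals the order-$\tfrac12$ Marcum $Q$-function, $P_{r}(X>x)=Q_{\frac{1}{2}}\!\left(\mu_{k}/\nu_{k},\sqrt{x}/\nu_{k}\right)$. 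For the interference term the residual phases $\Phi^{i,n}_{\chi,k}$ are effectively uniform, so by the CLT the real and imaginary parts of the inner sum are zero-mean Gaussian and $Y$ is exponentially distributed with mean $L_{k}(N_{\chi}-N^{k}_{\chi})$; denote its density $f_{Y}$.

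\emph{Integral.} Conditioning on $Y=y$ and using the CCDF above,
\[
OP_{k}=1-\int_{0}^{\infty}Q_{\frac{1}{2}}\!\left(\frac{\mu_{k}}{\nu_{k}},\frac{\sqrt{\varrho^{*}_{k}(\rho y+1)}}{\nu_{k}}\right)f_{Y}(y)\,dy.
\]
The substitution $s=\sqrt{\varrho^{*}_{k}(\rho y+1)}$ recasts the remaining integral as a multiple of $\int_{\sqrt{\varrho^{*}_{k}}}^{\infty}Q_{\frac{1}{2}}(\mu_{k}/\nu_{k},s/\nu_{k})\,s\,e^{-s^{2}/(2u^{2}_{k})}\,ds$, where $u^{2}_{k}=\tfrac12\rho\varrho^{*}_{k}L_{k}(N_{\chi}-N^{k}_{\chi})$ is precisely the constant tying the exponential scale of $Y$ to $\varrho^{*}_{k}\rho$. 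Integration by parts with $dv=s\,e^{-s^{2}/(2u^{2}_{k})}\,ds$ produces a boundary term reconstructing $Q_{\frac{1}{2}}(\mu_{k}/\nu_{k},\sqrt{\varrho^{*}_{k}}/\nu_{k})$, plus a residual integral of $e^{-s^{2}/(2u^{2}_{k})}$ against the two Gaussian bumps $e^{-(s\mp\mu_{k})^{2}/(2\nu^{2}_{k})}$ coming from $\partial_{s}Q_{\frac{1}{2}}$. Completing the square in each bump yields the factor $e^{-\mu^{2}_{k}/(2(\nu^{2}_{k}+u^{2}_{k}))}$ and a pair of complementary error functions; invoking $Q_{\frac{1}{2}}(a,b)=\tfrac12\left[\mathrm{erfc}(\tfrac{b-a}{\sqrt{2}})+\mathrm{erfc}(\tfrac{b+a}{\sqrt{2}})\right]$ with $a=\tfrac{\mu_{k}}{\nu_{k}}\sqrt{u^{2}_{k}/(\nu^{2}_{k}+u^{2}_{k})}$ and $b=\sqrt{\varrho^{*}_{k}(\nu^{2}_{k}+u^{2}_{k})/(\nu^{2}_{k}u^{2}_{k})}$ recombines them into the second Marcum-$Q$ term, while the substitution constant supplies $e^{\varrho^{*}_{k}/(2u^{2}_{k})}$. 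Collecting the prefactors reproduces \eqref{OP_11}.

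\emph{Main obstacle.} The crux is this last evaluation: the integration by parts together with completing the square in the two Gaussian integrals, and matching the resulting $\mathrm{erfc}$ pair to $Q_{\frac{1}{2}}$ with exactly the arguments and prefactors above. The delicate part is the constant bookkeeping---the merged quadratic coefficient $\tfrac{1}{2u^{2}_{k}}+\tfrac{1}{2\nu^{2}_{k}}$, the induced mean shift $\mu_{k}u^{2}_{k}/(\nu^{2}_{k}+u^{2}_{k})$, and the overall substitution Jacobian---whereas the reduction and the CLT-based distribution identifications are routine.
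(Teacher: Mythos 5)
Your proposal is correct, and its skeleton coincides with the paper's Appendix A: you collapse the $k$ detection events into the single inequality $\big|r^{k}_{\chi,k}\big|^{2}\ge\varrho^{*}_{k}\big(\rho\big|\sum_{i\ne k}r^{i}_{\chi,k}\big|^{2}+1\big)$ (valid under $a_{j}>\gamma^{j}_{th}\sum_{j}$, which you note), and you invoke the CLT so that $\big|r^{k}_{\chi,k}\big|^{2}$ is noncentral chi-square with one degree of freedom while the aggregate interference power is exponential --- exactly the paper's reduction $OP_{k}=F_{R_{k}}\left(\varrho^{*}_{k}\right)$ with the same Gaussian identifications. Where you genuinely diverge is the final step. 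The paper treats $R_{k}=\big|r^{k}_{\chi,k}\big|^{2}-\rho\varrho^{*}_{k}\big|\sum_{i\ne k}r^{i}_{\chi,k}\big|^{2}$ as the difference of independent noncentral and central chi-square variates and simply quotes its closed-form CDF from \cite{Chi}; you instead re-derive that CDF from scratch by conditioning on the exponential interference, substituting $s=\sqrt{\varrho^{*}_{k}(\rho y+1)}$, integrating by parts, and completing the square. Your constant bookkeeping checks out: with $u^{2}_{k}=\tfrac{1}{2}\rho\varrho^{*}_{k}L_{k}\left(N_{\chi}-N^{k}_{\chi}\right)$ the substitution produces the prefactor $u_{k}^{-2}e^{\varrho^{*}_{k}/(2u^{2}_{k})}$, the boundary term of the integration by parts reconstructs $Q_{\frac{1}{2}}\left(\mu_{k}/\nu_{k},\sqrt{\varrho^{*}_{k}}/\nu_{k}\right)$, and the two Gaussian bumps of $\partial_{s}Q_{\frac{1}{2}}$ yield, after completing the square, the merged variance $u^{2}_{k}\nu^{2}_{k}/(\nu^{2}_{k}+u^{2}_{k})$, the mean shift $\mu_{k}u^{2}_{k}/(\nu^{2}_{k}+u^{2}_{k})$, the damping factor $e^{-\mu^{2}_{k}/(2(\nu^{2}_{k}+u^{2}_{k}))}$, and a pair of $\mathrm{erfc}$ terms that recombine into the second Marcum-$Q$ with precisely the stated arguments, reproducing the proposition. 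As for what each route buys: the paper's citation keeps the appendix short but rests entirely on the tabulated result; your computation is self-contained and verifiable without the handbook, and in effect supplies a proof of the very identity the paper borrows, at the cost of the longer calculation you correctly flag as the main obstacle.
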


\begin{proof}See Appendix \ref{sec:ApendixA}.
\end{proof}

In order to gain further insights on the OP performance, we give the following Corollary.
\begin{corollary} Considering the asymptotic behavior of OP, for $\rho\rightarrow \infty$, i.e., $\varrho^{*}_{k}\rightarrow 0$, OP is given by
\begin{align}
	\label{CDFchi}
	OP^{\infty}_{k} & = \left(\frac{\tilde{u}^{2}_{k}}{\nu^{2}_{k} + \tilde{u}^{2}_{k}}\right)^{\frac{1}{2}}e^{-\frac{\mu^{2}_{k}}{2\left(\nu^{2}_{k} + \tilde{u}^{2}_{k}\right)}},
\end{align}
\end{corollary}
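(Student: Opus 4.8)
The plan is to obtain \eqref{CDFchi} by letting $\rho\rightarrow\infty$ (equivalently $\varrho^{*}_{k}\rightarrow 0$) directly in the exact expression \eqref{OP_11} and tracking how each of its three terms behaves in the limit. The first observation I would make is that, although $\varrho^{*}_{k}\rightarrow 0$, the quantity $u_{k}$ does \emph{not} vanish: since $\varrho_{j}=\gamma^{j}_{th}/[\rho(a_{j}-\gamma^{j}_{th}\sum_{j})]$, the product $\rho\varrho^{*}_{k}=\max_{j}\{\gamma^{j}_{th}/(a_{j}-\gamma^{j}_{th}\sum_{j})\}$ is independent of $\rho$, so that $u^{2}_{k}=0.5\rho\varrho^{*}_{k}L_{k}(N_{\chi}-N^{k}_{\chi})$ converges to the finite positive constant $\tilde{u}^{2}_{k}$. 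This compensation is the key structural fact, and recognizing it is what rules out the naive (and incorrect) conclusion that $u_{k}\rightarrow 0$.

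For the first two terms of \eqref{OP_11}, I would use the limiting value of the generalized Marcum Q-function at a vanishing second argument. As $\varrho^{*}_{k}\rightarrow 0$ the second argument $\sqrt{\varrho^{*}_{k}}/\nu_{k}\rightarrow 0$, and since $Q_{\frac{1}{2}}(a,0)=1$ for every $a$ (the Marcum Q-function being the complementary CDF of a positive random variable evaluated at $0$), we obtain $1-Q_{\frac{1}{2}}(\mu_{k}/\nu_{k},\sqrt{\varrho^{*}_{k}}/\nu_{k})\rightarrow 0$. Hence the entire first line of \eqref{OP_11} drops out.

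For the third term I would evaluate its four factors separately. The prefactor tends to $(\tilde{u}^{2}_{k}/(\nu^{2}_{k}+\tilde{u}^{2}_{k}))^{1/2}$ and the exponential $e^{-\mu^{2}_{k}/[2(\nu^{2}_{k}+u^{2}_{k})]}$ tends to $e^{-\mu^{2}_{k}/[2(\nu^{2}_{k}+\tilde{u}^{2}_{k})]}$, by substituting $u_{k}\rightarrow\tilde{u}_{k}$. The factor $e^{\varrho^{*}_{k}/(2u^{2}_{k})}$ tends to $1$, since its exponent equals $1/[\rho L_{k}(N_{\chi}-N^{k}_{\chi})]\rightarrow 0$. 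Finally, the second argument of the remaining Marcum Q-function is $\sqrt{\varrho^{*}_{k}(\nu^{2}_{k}+u^{2}_{k})/(\nu^{2}_{k}u^{2}_{k})}$; writing $\varrho^{*}_{k}/u^{2}_{k}=2/[\rho L_{k}(N_{\chi}-N^{k}_{\chi})]\rightarrow 0$ shows this argument vanishes, so again $Q_{\frac{1}{2}}(\cdot,0)=1$. Multiplying the surviving factors reproduces exactly \eqref{CDFchi}.

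The main obstacle I anticipate is the bookkeeping of the rate at which each parameter scales with $\rho$, and in particular keeping $\rho\varrho^{*}_{k}$ as an $O(1)$ constant while $\varrho^{*}_{k}\rightarrow 0$; once that scaling is settled, the only analytic input required is the elementary identity $Q_{\frac{1}{2}}(a,0)=1$, and the remainder is direct substitution.
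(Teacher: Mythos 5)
Your proof is correct and follows essentially the same route as the paper's: substitute $\varrho^{*}_{k}\rightarrow 0$ into the exact expression \eqref{OP_11}, invoke $Q_{\frac{1}{2}}(a,0)=1$ for both Marcum Q-functions, and recognize that $u_{k}$ tends to $\tilde{u}_{k}$ because $\rho\varrho^{*}_{k}$ is independent of $\rho$ (the paper encodes this via its definitions of $\varrho^{\dagger}_{k}$ and $\tilde{u}_{k}$). Your write-up is in fact more explicit than the paper's terse ``with some calculations'' in tracking why $u_{k}$ does not vanish and why $e^{\varrho^{*}_{k}/(2u^{2}_{k})}\rightarrow 1$, but the underlying argument is identical.
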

\begin{proof}
		An asymptotic analysis is carried out in the high SNR region, which means that $\rho\rightarrow \infty$. Then, $\varrho^{*}_{k}= \underset{j = 1, ..., k}{\max}\left\{\frac{\gamma^{j}_{th}}{\rho a_{j}-\rho\gamma^{j}_{th}\sum_{j}}\right\}_{\rho\rightarrow \infty} = 0$. By substituting $\varrho^{*}_{k}=0$ into Proposition 1 and with some calculations besides the benefits of $Q_{\frac{1}{2}}\left(\frac{\mu_{k}}{\nu_{k}},0\right) = Q_{\frac{1}{2}}\left(\frac{\mu_{k}}{\nu_{k}}\sqrt{\frac{u^{2}_{k}}{\nu^{2}_{k} + u^{2}_{k}}},0\right) = 1$, $\tilde{u}_{k} = \sqrt{0.5\varrho^{\dagger}_{k}L_{k}\left(N_{\chi}-N^{k}_{\chi}\right)}$ and $\varrho^{\dagger}_{k} = \underset{j = 1, ..., k}{\max}\left\{\frac{\gamma^{j}_{th}}{ a_{j}-\gamma^{j}_{th}\sum_{j}}\right\}$, then Corollary 1 is obtained.
\end{proof}
\vspace{-0.2cm}
From Corollary 1, it can be noticed that the OP is not a function of SNR and it reaches a fixed value (error floor) due to subsurface-interference when the transmit SNR increases. This results in that each user will not achieve diversity gain (zero-diversity order). 

\begin{proposition} In the case of a two-user STAR-RIS-assisted NOMA network, $\left(K_{t}, K_{r}\right) = \left(1, 1\right)$, the OP of the $U_{k}$ can be expressed in closed-form as
\begin{align}
	\label{OP_11-twouser}
	OP_{k}&=1 - Q_{\frac{1}{2}}\left(\frac{\mu_{k}}{\nu_{k}},\frac{\sqrt{\varrho^{*}_{k}}}{\nu_{k}}\right).
\end{align}
Here, considering the asymptotic behavior, for $\rho\rightarrow \infty$, i.e., $\varrho^{*}_{k}\rightarrow 0$, we have $OP_{k}=0$. It is worth mentioning that in this paper, we use CLT approximation to model the statistics of SINR of the STAR-RIS-assisted NOMA networks. This approximation becomes more accurate for a large number of STAR-RIS elements\footnote{ Note that by letting $\lambda$ be arbitrarily small and/or letting $d$ be arbitrarily large as $N$ increases, the RIS far-field boundaries can be maintained, and thus, the channel model validity can be preserved \cite{path_exp}.}. However, in our proposed system, the OP becomes zero only theoretically, but in practice, it is a function of SNR as will be seen in Section V.
\end{proposition}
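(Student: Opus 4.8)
The plan is to recover Proposition~2 as the degenerate case of Proposition~1 in which the subsurface-interference term is absent. First I would record the structural consequence of the deployment $(K_t,K_r)=(1,1)$: each region contains a single user and hence a single subsurface, so the index sets $\mathbb{K}_t=\{1\}$ and $\mathbb{K}_r=\{2\}$ are singletons and the interference sum $\sum_{i\ne k,\,i\in\mathbb{K}_\chi} r^i_{\chi,k}$ in \eqref{SINR-k}--\eqref{SINR-K} is empty. Equivalently, the lone user occupies the entire part of the surface, so $N^k_\chi=N_\chi$ and $N_\chi-N^k_\chi=0$. Substituting this into the definition of $u_k$ gives $u_k=0$, which is precisely the regime in which the second and third lines of \eqref{OP_11} must collapse.

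The main step is then to evaluate \eqref{OP_11} as $u_k\to 0^+$ and to show that the additive correction comprising its second and third lines vanishes, leaving $OP_k=1-Q_{\frac{1}{2}}\!\left(\mu_k/\nu_k,\sqrt{\varrho^*_k}/\nu_k\right)$. I expect this to be the main obstacle, since that correction is an indeterminate $0\times\infty$ form: the amplitude $\left(u^2_k/(\nu^2_k+u^2_k)\right)^{1/2}\to 0$ and the Marcum factor $Q_{\frac{1}{2}}(\cdot,\cdot)\to 0$ because its second argument $\sqrt{\varrho^*_k(\nu^2_k+u^2_k)/(\nu^2_k u^2_k)}$ diverges, yet the prefactor $e^{\varrho^*_k/(2u^2_k)}$ blows up. To resolve it I would invoke the large-argument asymptotics of the generalized Marcum $Q$-function, $Q_{\frac{1}{2}}(a,b)\sim\sqrt{2/\pi}\,b^{-1}e^{-b^2/2}$ as $b\to\infty$ (with $a\to 0$ here); the leading part $-\varrho^*_k/(2u^2_k)$ of the exponent $-b^2/2$ cancels the divergent $e^{\varrho^*_k/(2u^2_k)}$ exactly, and the surviving prefactors are $O(u^2_k)\to 0$, closing the argument.

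A cleaner route that avoids this delicate limit altogether is to rerun the derivation behind Proposition~1 (Appendix~\ref{sec:ApendixA}) with the interference power fixed at zero. Writing $X=\big|r^k_{\chi,k}\big|^2$ for the desired-signal power, the success event $E^c_{k\leftarrow 1}\cap\cdots\cap E^c_{k\leftarrow k}$ in \eqref{OP} reduces, through the SINR constraints in \eqref{SINR}, to the single deterministic-threshold event $\{X\ge\varrho^*_k\}$, so that $OP_k=P_r(\sqrt{X}<\sqrt{\varrho^*_k})$. Since the CLT models $\sqrt{X}$ as a Gaussian amplitude with mean $\mu_k$ and standard deviation $\nu_k$, whose complementary CDF at a level $t$ is exactly $Q_{\frac{1}{2}}(\mu_k/\nu_k,t/\nu_k)$, equation \eqref{OP_11-twouser} follows in a single line with no cancellation required; I would use this as a sanity check on the limiting computation above.

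Finally, for the asymptotic claim I would let $\rho\to\infty$ in $\varrho^*_k=\max_{j=1,\dots,k}\{\gamma^j_{th}/(\rho a_j-\rho\gamma^j_{th}\sum_j)\}$, which forces $\varrho^*_k\to 0$ and hence $\sqrt{\varrho^*_k}/\nu_k\to 0$. Applying the identity $Q_{\frac{1}{2}}(\mu_k/\nu_k,0)=1$ already used in the proof of Corollary~1, equation \eqref{OP_11-twouser} gives $OP_k\to 1-1=0$. I would close by stressing that this vanishing error floor is an artifact of the idealized CLT model and of the exact absence of subsurface interference, consistent with the remark that the measured OP remains SNR-dependent in practice.
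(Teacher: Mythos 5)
Your proposal is correct, and it in fact contains two proofs: your secondary ``cleaner route'' is essentially the paper's own argument in Appendix \ref{sec:ApendixB}, while your primary route (degenerating Proposition 1) is genuinely different from what the paper does. The paper proceeds directly: with $\left(K_{t}, K_{r}\right)=\left(1,1\right)$ the interference sum is empty, the outage event in \eqref{OP} collapses (via the maximum defining $\varrho^{*}_{k}$) to $\big\{|r^{k}_{\chi,k}|^{2}<\varrho^{*}_{k}\big\}$, and since $r^{k}_{\chi,k}$ is CLT-Gaussian with mean $\mu_{k}$ and variance $\nu_{k}^{2}$, its squared magnitude is a one-degree-of-freedom non-central chi-square whose CDF $1-Q_{\frac{1}{2}}\left(\mu_{k}/\nu_{k},\sqrt{x}/\nu_{k}\right)$ from \cite{Chi} gives \eqref{OP_11-twouser}; the high-SNR claim then follows from $Q_{\frac{1}{2}}\left(\mu_{k}/\nu_{k},0\right)=1$, exactly as you argue. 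Your limiting route is also sound: writing $b^{2}=\varrho^{*}_{k}\left(\nu^{2}_{k}+u^{2}_{k}\right)/\left(\nu^{2}_{k}u^{2}_{k}\right)$ one has $b^{2}/2=\varrho^{*}_{k}/(2u^{2}_{k})+\varrho^{*}_{k}/(2\nu^{2}_{k})$, so in the expansion $Q_{\frac{1}{2}}(a,b)\sim\sqrt{2/\pi}\,b^{-1}e^{-b^{2}/2}$ the Gaussian factor cancels the divergent $e^{\varrho^{*}_{k}/(2u^{2}_{k})}$ exactly, and the surviving prefactor $\sqrt{u^{2}_{k}/(\nu^{2}_{k}+u^{2}_{k})}\cdot b^{-1}=O\left(u^{2}_{k}\right)$ annihilates the correction term, as you claim. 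What your route buys is a continuity check the paper never performs, namely that Proposition 1 degenerates smoothly into Proposition 2 as the interference vanishes. What the paper's (and your backup) route buys is both brevity and logical necessity: in the two-user system $u_{k}$ is identically zero, so \eqref{OP_11} is an indeterminate expression at that point rather than merely near it, and its $u_{k}\rightarrow 0^{+}$ limit can only be identified with the true OP through the direct derivation, since the CLT model of the interference underlying \eqref{OP_11} itself presupposes $N_{\chi}-N^{k}_{\chi}\gg 1$. Framing the direct derivation as the proof and the limit as a consistency check---which is how you ultimately present it---is the right way around.
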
 

\begin{proof} See Appendix \ref{sec:ApendixB}.
	\vspace{-0.2cm}
	\end{proof}
From Proposition 2, it can be noticed that in the case of a two-user STAR-RIS-assisted NOMA network, since there is no subsurface-interference, the OP becomes zero in the high SNR region. 
\vspace{-0.3cm} 
\section{Partitioning Algorithms }
\label{sec:4}
In this section, we provide the partitioning algorithms used to allocate STAR-RIS elements for different users on both sides of the surface. It is worth mentioning that the authors in [38] proposed a novel NOMA solution with RIS partitioning. In particular, they distributed the physical resources among users such that the BS and RIS were dedicated to serving different clusters of users. The RIS partitioning algorithm aims to enhance SE by improving all users' ergodic rates and maximizing user fairness. In order to achieve this goal, they formulated an optimization problem where Jain’s fairness index is the objective function to be maximized, and the proper numbers of reflecting elements needed to be assigned for users are the decision variables to be determined. Note that although the general structure of the proposed algorithms is motivated by the work of [38], they are different in the following key aspects. First, unlike the algorithms in [38], which aim to maximize user fairness [Section IV, Eq. (20a)], the proposed algorithms aim to maximize the users' sum rate while guaranteeing QoS requirements for all users. In particular, Algorithm 1 guarantees a unified OP requirement for all users, while Algorithm 2 considers the different rate requirements for individual users. Furthermore, in Algorithm 1, the OP expression is totally different from the one in the pointed-out work due to obviously different systems models. In what follows, we present the partitioning algorithms considering different users' deployments. 
\subsection{Multi-user STAR-RIS-assisted NOMA system}
In the case of a multi-user STAR-RIS-assisted NOMA system, our partitioning algorithm is carried out in two stages. First, the STAR-RIS surface is partitioned into two parts, transmission and reflection parts, which contain all the elements that operate in transmission and reflection mode, respectively. Second, each STAR-RIS part is partitioned into subsurfaces, where each subsurface is allocated to serve a specific user located on the side of that part. In the following, both stages are explained in detail.

At the first stage, we assign temporary numbers of transmitting $(\tilde{N}_t)$ and reflecting $(\tilde{N}_r)$ STAR-RIS elements for the transmission and reflection STAR-RIS parts, respectively. Furthermore, the mode of the remaining STAR-RIS elements $(N - \tilde{N}_t - \tilde{N}_r)$ will be selected further according to sum-rate and QoS requirements. $\tilde{N}_t$ and $\tilde{N}_r$ can be simply calculated, respectively as $\tilde{N}_t = K_{t}N_{thr}$ and $\tilde{N}_r = K_{r}N_{thr}$, where $N_{thr}$ is the minimum number of STAR-RIS elements that needs to be allocated to each user in order to protect it from the subsurface-interference. We use Algorithm 1 to obtain the value of $N_{thr}$ \cite{RISpartitioning}. Particularly, Algorithm 1 utilizes a probabilistic approach, where in order to protect the users from the mutual sub-surfaces interference, a common minimum OP is guaranteed for all users by finding the proper $N_{thr}$. This is achieved by keeping updating $N_{thr}$ until the OP gets below the predetermined threshold $\epsilon$. Note that the asymptotic OP in \eqref{CDFchi} is used to capture the interference effect only and ignore the different SNR levels at different users. 

On the other hand, at the second stage, we aim to determine the proper number of transmitting/reflecting STAR-RIS elements that is required for each user to maximize the sum-rate and guarantee QoS for individual users. To achieve this goal, at first, Algorithm 2 is used to determine the number of STAR-RIS elements that need to be added to $N_{thr}$ in order to fulfill the QoS requirement $R_k^{min}$ for each user. According to \eqref{rk}, the cascaded channels' gains are proportionate with the number of transmitting or reflecting elements.
\begin{algorithm}[t]
	\caption{Determination of minimum number of STAR-RIS elements that needs to be allocated to each user $(N_{thr})$.}
	\begin{algorithmic}[1]
		\REQUIRE $L_{SU,\chi,k}$, $K$, $N$, $\epsilon$.
		\STATE Initialize $N_{thr}=1$.
		\STATE \textbf{repeat}
		\STATE $\mu_{k} = \frac{\pi}{4}\sqrt{L_{k}}N_{thr}, \nu_{k} = \sqrt{\left(1-\frac{\pi^{2}}{16}\right)L_{k}N_{thr}}$,  $\tilde{u}_{k} = \sqrt{0.5\varrho^{\dagger}_{k}L_{k}\left(N-N_{thr}\right)}$.
		\STATE Obtain the asymptotic OP from \eqref{CDFchi}:\\ $OP^{\infty}_{k}	=\left(\frac{\tilde{u}^{2}_{k}}{\nu^{2}_{k} + \tilde{u}^{2}_{k}}\right)^{\frac{1}{2}}e^{-\frac{\mu^{2}_{k}}{2\left(\nu^{2}_{k} + \tilde{u}^{2}_{k}\right)}}$
		\STATE $N_{thr} = N_{thr} + 1$.
		\STATE \textbf{while} $OP^{\infty}_{k}\le\epsilon$ and $KN_{thr}\leq N$.
		\RETURN $N_{thr} = N_{thr} - 1$.
	\end{algorithmic}
\end{algorithm}
\begin{algorithm}[t]
	\caption{Determination of proper number of transmitting/reflecting STAR-RIS elements to be assigned to $U_{k}$ $(N^{k}_{\chi})$.}
	\begin{algorithmic}[1]
		\REQUIRE $N_{thr}$, $R^{min}_{k}$.
		\STATE Initialize $N^{k}_{\chi}= N^{k-1}_{\chi}$. If $k = 1$, then $N^{1}_{\chi} = N_{thr}$.
		\STATE \textbf{repeat}
		\STATE Obtain the Ergodic rate in the high SNR values:\\ $R_{k} = \text{E}\left[\log_{2}\left(1 + \gamma^{k}_{{k}}\right)\right]$. Perform the expectation over $ 10^{4} $ random channel realizations.
		\STATE $N^{k}_{\chi} = N^{k}_{\chi} + 1$. 
		\STATE \textbf{while} $R_{k}<R^{min}_{k}$.
		\RETURN $N^{k}_{\chi} = N^{k}_{\chi} - 1$.
	\end{algorithmic}
\end{algorithm}
Therefore, if the cascaded channels' gains are sorted as $\big|r^{1}_{\chi,1}\big|^{2}\le ...\le \big|r^{k}_{\chi,k}\big|^{2}\le ...\le\big|r^{k}_{\chi,K}\big|^{2} $, then the transmitting/reflecting STAR-RIS elements assigned for the users have to be also ordered as $N^{1}_{\chi}\le...\le N^{k}_{\chi} \le...\le N^{K}_{\chi}$, which is ensured by Algorithm 2. Next, in order to maximize the sum-rate, the remaining STAR-RIS elements are added to the closest user to the BS, $U_K$, which can eliminate the interference of all other users. At the end, the proper values of $N_t$ and $N_r$ are updated, respectively as $N_t = \sum_{k = 1}^{K_{t}}N^{k}_{t}$ and $N_r = \sum_{k = K_{t} + 1}^{K}N^{k}_{r}$.

In terms of the complexity, it can be readily seen that both algorithms have a finite maximum number of iterations that is upper-bounded by $N$, which guarantees their convergence. Furthermore, considering the required number of complex multiplications (CMs) as a metric, the complexity level of Algorithm 2 can be upper-bound by $\mathcal{O}(N^3)$. This can be obtained by noting that Algorithm 2 requires the calculation of $\gamma_k^k$ which, in turn, includes the vector-matrix multiplication of $r^{i}_{\chi,k}, i\neq k$, that requires $N^2$ CMs for every single iteration. On the other side, Algorithm 1 has no CM, therefore; its complexity level depends on the type of the used computational algorithm. Accordingly, the proposed Algorithms 1 and 2, have the same complexity level as the ones in \cite{RISpartitioning}.
\subsection{Two-user STAR-RIS-assisted NOMA system}
For the case of a two-user STAR-RIS-assisted NOMA system, one user is located in $\text{C}_t$ and the other user is located in $\text{C}_r$. Since,  in this case, there is no subsurface interference, we skip the first stage of the partitioning process and start directly from the second one. That is, Algorithm 2 is directly applied to fulfill the QoS requirement for $U_1$, with $N_{thr}=1$ and \eqref{SINR-11} is used to obtain $\gamma_k^k$. Next, the remaining elements need to be allocated to $U_2$, assuming $N$ is large enough to fulfill both users' QoS requirements.

\begin{remark}
As seen from Algorithms 1 and 2, the partitioning scheme does not depend on the instantaneous CSI; rather, it needs to be updated only when $K_{\chi}$, power allocation at the BS, or distances of users from the STAR-RIS change, where these parameters slowly change over several coherence channel intervals. Furthermore, after the partitioning stage and for a given channel coherence interval, BS needs to know the CSI of the BS-STAR-RIS link, which is the same for all users, and the STAR-RIS ($k$th sub-surface)-$U_k$ link, for each individual user $k$. This means that, without loss of generality, for the STAR-RIS-$U_k$ links, a maximum of $N$ pilot signals are required at the channel estimation stage. Compared to the joint phase shift design for all users (where $NK$ pilot signals are required \cite{ch-est}), an effective reduction in the channel estimation overhead is obtained by the proposed system. Furthermore, from Algorithms 1 and 2, the phase shift design problem for all users is scalable as $K_{\chi}$ changes.
\end{remark}
\section{Numerical Results}
\label{sec:5}
In this section, the performance of the considered downlink STAR-RIS-assisted multi-user wireless networks is presented via computer simulations in order to validate our theoretical analysis. We consider three cases of users' deployment, namely, $\left(K_{t}, K_{r}\right) = \left(1, 1\right)$, $\left(K_{t}, K_{r}\right) = \left(2, 1\right)$ and $\left(K_{t}, K_{r}\right) = \left(1, 2\right)$. For each case, user locations,  power allocation coefficients\footnote{Here, we focus on NOMA with fixed power allocation (F-NOMA), where a fixed amount of transmit power is allocated to each user. In particular, for the two users, the power-allocation coefficients are denoted by $a_{1}$ and $a_{2}$, where these coefficients are fixed, and $a_{1} + a_{2} = 1$ and $a_{1}\ge a_{2}$. Likewise, for the three users, $a_{1} + a_{2} + a_{3} = 1$ and $a_{1}\ge a_{2}\ge a_{1}$ \cite{F-NOMA}.}, corresponding target threshold SINR values, distances between the BS and STAR-RIS, and STAR-RIS and users are given in Table \ref{Table1} at the top of the next page. For all cases, the path gain at a reference distance of 1 meter is assumed as $\rho_0 = -30$ dB and the path loss exponents associated with the BS-STAR-RIS and the STAR-RIS-user links are given as $\left(\alpha_{BS}, \alpha_{SU}\right) = \left(2, 2\right)$ unless otherwise stated. Furthermore, we compare the proposed system with three benchmark systems, namely the classical OMA, PD-NOMA systems and uniform partitioning method by considering the same simulation parameters. Particularly, for the classical PD-NOMA system, we use the same power allocation that we use in the proposed system. On the other side, for the classical OMA system, we use the time division multiple access (TDMA) scheme, where the time slot allocation equals the power allocation in NOMA case; therefore, we always have the time slots $t_1=a_1$, $t_2=a_2$, and $t_3=a_3$. For the benchmark systems, the BS-$U_k$ path gain is obtained as $L_k=\frac{\rho_0}{d_k^\alpha}$, where $d_k$ is the BS-$U_k$ distance and $\alpha=3.5$ \cite{alpha}. In the uniform partitioning method, we divide the total number of STAR-RIS elements equally among users. This means that the number of reflecting elements equals the number of transmitting elements, which equals half of the total number of STAR-RIS elements.
\begin{figure}[t]
	\centering
	\subfloat[]{\label{fig2:a}\includegraphics[width=57mm, height=40mm]{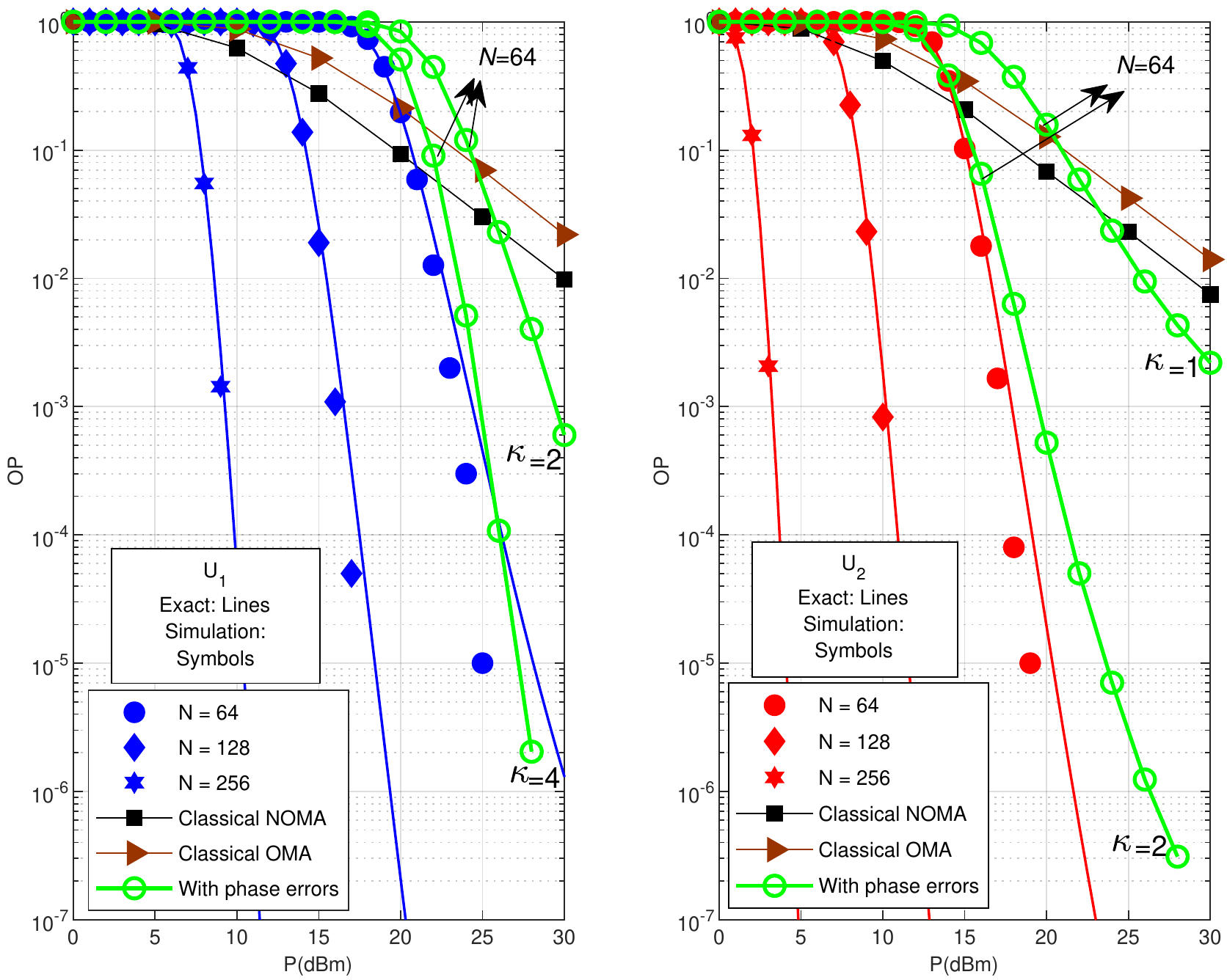}}
	\subfloat[]{\label{fig2:a}\includegraphics[width=30mm, height=40mm]{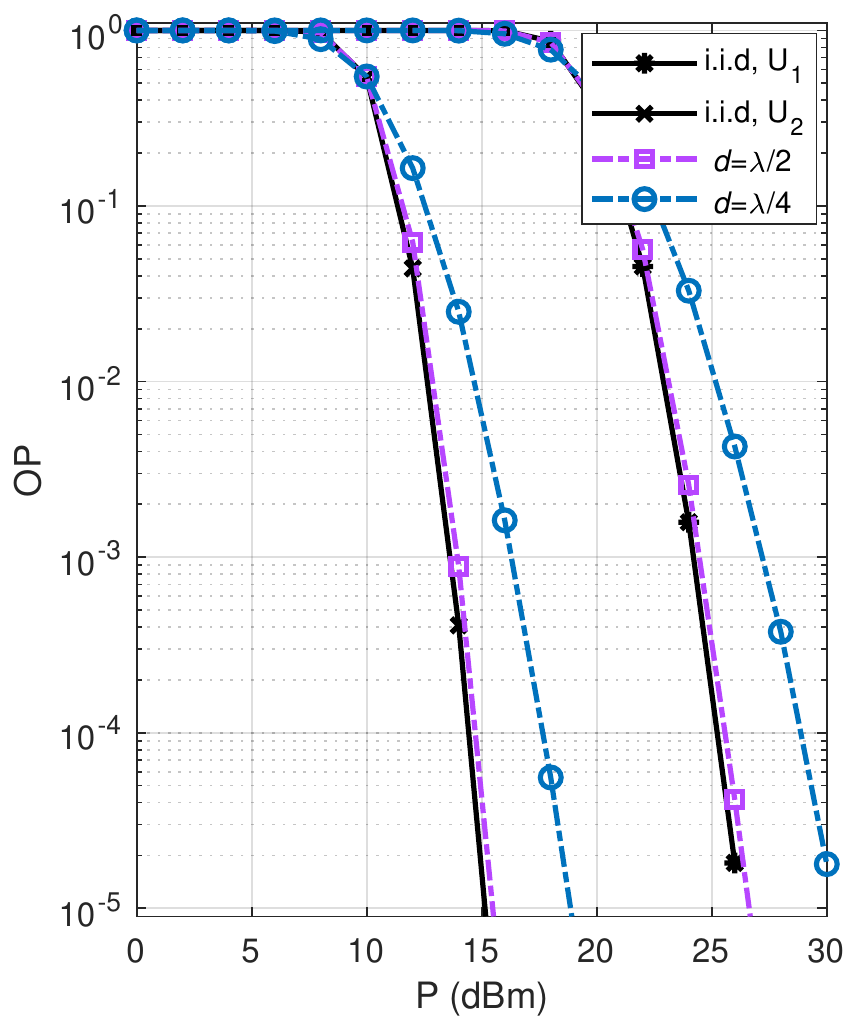}}
	\caption{OP performance in case of $\left(K_{t}, K_{r}\right) = \left(1, 1\right)$ (a) under different $N$ values and with/without phase errors, and (b) under spatial correlation effect with $N_t=24$ and $N_r=40$, STAR-RIS inter-element spacing $d$, and $1.8$ GHz operating frequency with $\lambda$ wavelength.}
	\label{Twousers}
	\vspace{-0.3cm}
\end{figure}
\begin{figure}[t!]
	\includegraphics[width=65mm,height=40mm]{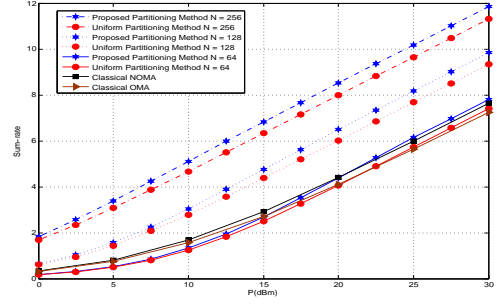}
	\centering
	\caption{The sum-rate versus $ P $ for different $N$ in the case of $\left(K_{t}, K_{r}\right) = \left(1, 1\right)$.}	
	\label{Twousers_sumrate}
	\vspace{-0.3cm}
\end{figure}
\begin{table*}[]
	\centering
	\caption{Parameters used in simulations for the three cases.}
	\vspace{-0.2cm}
	\label{Table1}
	\begin{tabular}{|
			>{\columncolor[HTML]{FFFFFF}}l |
			>{\columncolor[HTML]{FFFFFF}}l |
			>{\columncolor[HTML]{FFFFFF}}l |}
		\hline 
		\multicolumn{2}{|l|}{\cellcolor[HTML]{FFFFFF}User locations}               & \begin{tabular}[c]{@{}l@{}}Case 1: $U_{1}$ is located in $\text{C}_t$ and $U_{2}$ is located in $\text{C}_r$.\\ Case 2: $U_{1}$ and $ U_{2}$ are located in $\text{C}_t$, and $U_{3}$ is located in $\text{C}_r$.\\ Case 3: $U_{1}$ is located in $\text{C}_t$, and $U_{2}$ and $ U_{3}$ are located in $\text{C}_r$.\end{tabular}                                                                \\ \hline
		\multicolumn{2}{|l|}{\cellcolor[HTML]{FFFFFF}Power allocation coefficients} & \begin{tabular}[c]{@{}l@{}}Case 1: $\left(a_{1}, a_{2}\right)=\left(0.6, 0.4\right)$\\ Case 2: $\left(a_{1}, a_{2}, a_{3}\right)=\left(0.6, 0.3, 0.1\right)$\\ Case 3: $\left(a_{1}, a_{2}, a_{3}\right)=\left(0.6, 0.3, 0.1\right)$\end{tabular}                                                                                  \\ \hline
		\multicolumn{2}{|l|}{\cellcolor[HTML]{FFFFFF}Target threshold SINR values}  & \begin{tabular}[c]{@{}l@{}}Case 1: $\left(\gamma^{1}_{th}, \gamma^{2}_{th}\right) = \left(1,1\right)$\\ Case 2: $\left(\gamma^{1}_{th}, \gamma^{2}_{th}, \gamma^{3}_{th}\right) = \left(0.7, 0.7, 0.7\right)$\\ Case 3: $\left(\gamma^{1}_{th}, \gamma^{2}_{th}, \gamma^{3}_{th}\right) = \left(0.5, 0.5, 0.3\right)$\end{tabular} \\ \hline
		\multicolumn{2}{|l|}{\cellcolor[HTML]{FFFFFF}Distances (in meters)}                     & \begin{tabular}[c]{@{}l@{}}Case 1: $\left(d_{BS}, d_{SU,t,1}, d_{SU,r,2}\right) = \left(50, 50, 40 \right)$\\ Case 2: $\left(d_{BS}, d_{SU,t,1}, d_{SU,t,2}, d_{SU,r,3}\right) = \left(20, 50, 40, 30 \right)$\\ Case 3: $\left(d_{BS}, d_{SU,t,1}, d_{SU,r,2}, d_{SU,r,3}\right) = \left(20, 50, 40, 30 \right)$\end{tabular}
		 \\ \hline
	\end{tabular}
\end{table*}
\subsection{Case 1: $\left(K_{t}, K_{r}\right) = \left(1, 1\right)$} 

As shown in Fig. \ref{Twousers}(a), our theoretical result in \eqref{OP_11-twouser} using the CLT is considerably accurate for increasing $ N $ values and the OP performance enhances as number of STAR-RIS elements increases. For instance, to achieve $OP_{1}$ = $10^{-3}$, 10 dBm and 5 dBm gain advantages are obtained in $ P $ for $N = 128$ over $N = 64$ and $N = 256$ over $128$, respectively. On the other hand, to achieve $OP_{2}$ = $10^{-3}$, about 7.5 dBm and 7 dBm gain advantages are observed in $ P $ for $N = 128$ over $N = 64$ and $N = 256$ over $128$, respectively. Furthermore, the effect of phase errors is shown to significantly reduce the OP performance as the concentration parameter $\kappa$ decreases (flatter error distribution)\footnote{We consider Von Mises distribution to simulate the phase errors resulting from phase estimation errors and/or phase quantization in the case of discrete phase shifts \cite{phs-errs}.}, yet, the performance of the proposed system is still superior to the benchmark ones at the high $P$ region. In Fig. \ref{Twousers}(b), the spatial correlation effect is shown for two different inter-element spacing values $d$, where the spatial correlation model proposed for the RIS in \cite{spat-corr} is used. Note that, for $d=\lambda/2$, the curve obtained under the spatial correlation assumption is close to the one under the independent and identically distributed (i.i.d.) assumption of the RIS channels, which makes the theoretical OP obtained in Section III (under the i.i.d. assumption) applicable to the spatial correlation case. Furthermore, the spatial correlation (as $d$ decreases) is shown to have a negative impact on the OP performance due to the lack of diversity in the RIS channels.

Note that in Figs. \ref{Twousers}-\ref{Twousers_sumrate}, according to Algorithm 2, the number of the transmitting STAR-RIS elements for $U_{1}$ and the number of the reflecting STAR-RIS elements for $U_{2}$ are determined, respectively as $N^{1}_{t} = \left\{26,51,102\right\}$ and $N^{2}_{r} = \left\{38,77,154\right\}$. We can clearly notice from Fig. \ref{Twousers_sumrate} that the sum-rate proportionally increases as $ P $ increases. This is because the sum-rate depends remarkably on the strongest user ($U_{2}$) and in this case, $U_{2}$ does not suffer from any subsurface-interference. In addition, the proposed partitioning method outperforms the uniform partitioning method. Also, from both figures, we can see the superiority of our proposed system over the classical NOMA and OMA particularly by increasing the total number of the elements of STAR-RIS. For example, for $U_{1}$, to achieve OP$_{1}$ = $10^{-2}$, 10 dB, 17 dB and 23 dB SNR gain advantages are obtained for proposed system with $N =$ 64, 128 and 256 over classical NOMA, respectively. In addition, for $U_{1}$, to achieve OP$_{1}$ = $2\times10^{-2}$, 8 dB, 15 dB and 23 dB SNR gain advantages are obtained for proposed system with $N =$ 64, 128 and 256 over classical OMA, respectively.
\subsection{Case 2: $\left(K_{t}, K_{r}\right) = \left(2, 1\right)$}
Figs. \ref{Case21} and \ref{Case21sumrate} demonstrate the performance of the STAR-RIS-assisted NOMA network in terms of OP and sum-rate, respectively for different number of STAR-RIS elements. The number of transmitting/reflecting STAR-RIS elements assigned for each user is provided in Table \ref{Table2}. It can be seen from Fig. \ref{Case21} that when $ P $ increases, the $OP_{1}$ and $OP_{2}$ reach an error floor due to subsurface-interference. On the other hand, $OP_{3}$ does not reach error floor since $U_{3}$ does not suffer from any subsurface-interference. Compared to the classical NOMA and OMA, the proposed system is always better for $U_{3}$ in the whole $ P $ range as well as $U_{1}$ and $U_{2}$ except in the high $ P $ regime. As shown in Fig. \ref{Case21sumrate}, the sum-rate still proportionally increases as $ P $ increases since the strongest user $(U_{3})$ does not suffer from the subsurface-interference. Furthermore, the STAR-RIS requires about $ N = 90 $ and $N = 150$ elements to outperform classical OMA and classical NOMA, respectively.
\begin{table}[t]
	\centering
	\caption{Number of STAR-RIS elements assigned for each user in Case 2 when $N \in \left\{60, 90, 120, 150\right\} $.}
	\label{Table2}
	\begin{tabular}{|c|c|c|c|c|c|}
		\hline
		$N$ & $N^1_t$ & $N^2_t$ & $N^3_r$ & $N_t$ & $N_r$  \\ \hline
		60  & 16        & 20    & 24   &  36        & 24                 \\ \hline
		90  & 24        & 30    & 36   & 54         & 36                 \\ \hline
		120 & 32        & 40   & 48   & 72         & 48                 \\ \hline
		150 & 36        & 50   & 64   & 86         & 64               \\ \hline
	\end{tabular}
\end{table}
\begin{figure}[t!]
	\includegraphics[width=90mm,height=43mm]{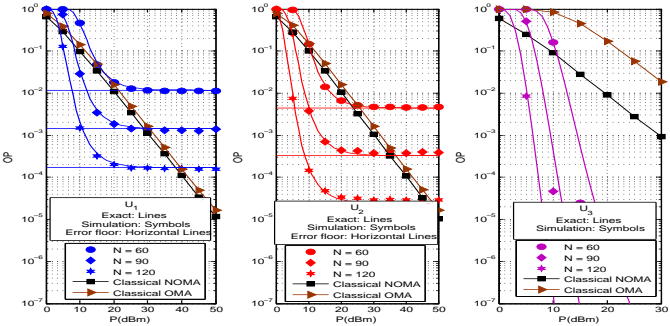}
	\centering
	\caption{The OP versus $ P $ for different $N$ in the case of $\left(K_{t}, K_{r}\right) = \left(2, 1\right)$.}	
	\label{Case21}
		\vspace{-0.5cm}
\end{figure}
\begin{figure}[t!]
	\includegraphics[width=65mm,height=40mm]{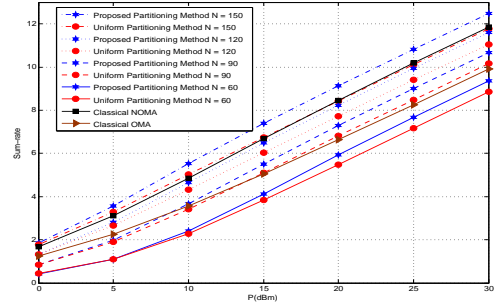}
	\centering
	\caption{The sum-rate versus $ P $ for different $N$ in the case of $\left(K_{t}, K_{r}\right) = \left(2, 1\right)$.}	
	\label{Case21sumrate}
		\vspace{-0.5cm}
\end{figure}
	\vspace{-0.5cm}
\subsection{Case 3: $\left(K_{t}, K_{r}\right) = \left(1, 2\right)$}
Figs. \ref{Case12} and \ref{Case12sumrate} present the OP and sum-rate performance of the STAR-RIS-assisted NOMA network for different number of STAR-RIS elements. The number of transmitting/reflecting STAR-RIS elements assigned for each user is provided in Table \ref{Table3}. From Fig. \ref{Case12}, it is clearly observed that the $OP_{2}$ and $OP_{3}$ reach an error floor due to subsurface-interference. On the other hand, the $OP_1$ does not reach error floor since $U_{1}$ does not suffer from any subsurface-interference. As seen from Fig. \ref{Case12sumrate}, the sum-rate saturates in the high $ P $ region. This is because the strongest user $(U_{3})$ suffers from subsurface-interference, which significantly affect the sum-rate performance. The proposed partitioning approach is also superior to the uniform partitioning approach. Although classical NOMA and OMA outperform the proposed system in the high SNR region, with a large number of STAR-RIS elements, the proposed system can outperform both classical systems. For instance, the proposed system requires more than $N = 390$ elements to outperform both classical systems.
\begin{table}[t!]
	\centering
	\caption{Number of STAR-RIS elements assigned for each user in Case 3 when $N \in  \left\{60, 90, 120, 180\right\} $.}
	\vspace{-0.2cm}
	\label{Table3}
	\begin{tabular}{|c|c|c|c|c|c|}
		\hline
		$N$ & $N^1_t$ & $N^2_r$ & $N^3_r$ & $N_t$ & $N_r$ \\ \hline
		60  & 16        & 20   & 24   & 16         & 44                 \\ \hline
		90  & 19        & 30   & 41   & 19         & 71                 \\ \hline
		120 & 22        & 40   & 58   & 22         & 98                \\ \hline
		180 & 35        & 60   & 85  & 35          & 145                \\ \hline
		390 & 52        & 130   & 208  & 52          & 338                \\ \hline
	\end{tabular}
\end{table}
\section{Conclusion}
\label{sec:6}
\begin{figure}[t!]
	\includegraphics[width=90mm,height=43mm]{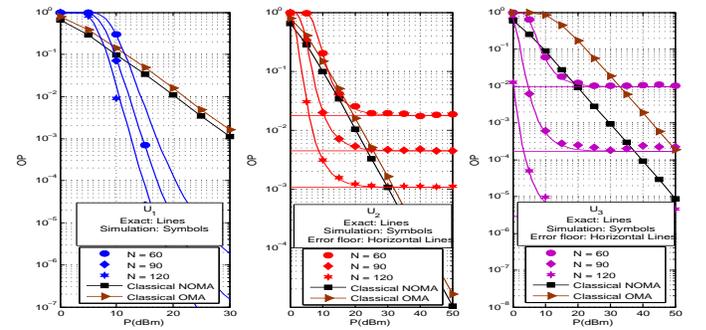}
	\centering
	\caption{The OP versus $ P $ for different $N$ in the case of $\left(K_{t}, K_{r}\right) = \left(1, 2\right)$.}	
	\label{Case12}
	\vspace{-0.4cm}
\end{figure}
\begin{figure}[t!]
	\includegraphics[width=65mm,height=40mm]{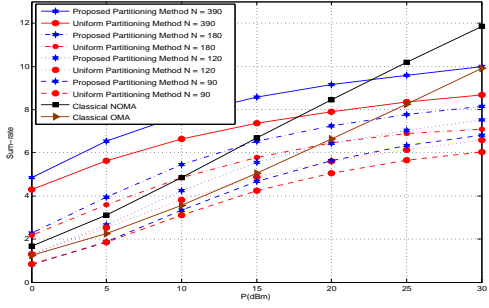}
	\centering
	\caption{The sum-rate versus $ P $ for different $N$ in the case of $\left(K_{t}, K_{r}\right) = \left(1, 2\right)$.}	
	\label{Case12sumrate}
	\vspace{-0.7cm}
\end{figure}
In this work, we have proposed a STAR-RIS-assisted NOMA network where the MS operating protocol is utilized to extend the coverage area of classical RIS. Next, we have designed a novel STAR-RIS partitioning algorithm to serve NOMA users on both sides of the STAR-RIS. The proposed algorithm aims to determine the proper number of transmitting/reflecting STAR-RIS elements that requires to be allocated to each user to maximize the sum-rate while fulfilling the QoS requirement. For different NOMA user deployments, we have provided a closed-form expression of the exact OP. Then, a asymptotic analysis has been conducted in order to study the system performance at the high SNR. We note that the OP reaches an error floor in the high SNR region due to the impact of the STAR-RIS's subsurface-interference and it can be reduced by increasing the number of STAR-RIS elements. Finally, simulation results have validated that the proposed partitioning algorithm outperforms the uniform partitioning algorithm. Moreover, the proposed system can achieve better performance in terms of OP and sum-rate than the classical NOMA and OMA. We conclude that the use of STAR-RIS NOMA has the potential to provide an enhanced version of classical NOMA systems, where unlike the classical RIS, the STAR-RIS doubled the coverage area. Furthermore, in addition to the STAR-RIS MS mode, our proposed partitioning scheme has added more degrees of freedom to the system design, compared to the classical PD-NOMA protocol, by flexibly and dynamically allocating the physical resources (STAR-RIS elements) among users. In the future, we extend the proposed system to STAR-RIS aided NOMA networks with multiple antennas.
\appendices
\vspace{-0.5cm}
\section{Proof of Proposition 1}
\label{sec:ApendixA}
By considering \eqref{SINR},  $E^c_{k\leftarrow j}$ can be expressed as
\vspace{-0.1cm}
{\small
	\begin{align}
\label{SINRevent}
E^c_{k\leftarrow j} &=\nonumber\\
\
& \left\{\frac{\rho\Big|r^{k}_{\chi,k}\Big|^{2}a_{j}}{\rho\Big|r^{k}_{\chi,k}\Big|^{2}\sum_{j} + \rho\Big|\sum_{\underset{i\ne k}{i \in\mathbb{K}_{\chi}}}^{}r^{i}_{\chi,k}\Big|^{2} + 1}>\gamma^{j}_{th}\right\}\nonumber\\
\
&=^{\text{(c)}}\Bigg\{\Big|r^{k}_{\chi,k}\Big|^{2}> \rho\varrho_{j}\Big|\sum_{\underset{i\ne k}{i \in\mathbb{K}_{\chi}}}^{}r^{i}_{\chi,k}\Big|^{2} + \varrho_{j}\Bigg\},
\end{align}}
\vspace{-0.1cm}
where $\sum_{j}= \sum_{l= j +1 }^{K}a_{l}$, $\varrho_{j} =  \frac{\gamma^{j}_{th}}{\rho a_{j}-\rho\gamma^{j}_{th}\sum_{j}}$ and the step (c) is obtained under the condition of $a_{j}>\gamma^{j}_{th}\sum_{l= j +1 }^{K}a_{l}$. Now, substituting \eqref{SINRevent} into \eqref{OP} and defining $\varrho^{*}_{j}= \underset{j = 1, ..., k}{\max}\left\{\varrho_{j}\right\}$, then the OP of the $U_{k}$ can be stated as
{\small
	\begin{align}
\label{OP_1} 
OP_{k} &= 1 - P_{r}\Bigg(\Big|r^{k}_{\chi,k}\Big|^{2}> \rho\varrho^{*}_{k}\Big|\sum_{\underset{i\ne k}{i \in\mathbb{K}_{\chi}}}^{}r^{i}_{\chi,k}\Big|^{2} +\varrho^{*}_{k} \Bigg)\nonumber\\
\
& = P_{r}\bigg(\Big|r^{k}_{\chi,k}\Big|^{2}< \rho\varrho^{*}_{k}\Big|\sum_{\underset{i\ne k}{i \in\mathbb{K}_{\chi}}}^{}r^{i}_{\chi,k}\Big|^{2} +\varrho^{*}_{k} \bigg)
\nonumber\\
\
&=F_{{R}_{k}}\left(\varrho^{*}_{k}\right).
\end{align}}
Here, ${R}_{k} = \big|r^{k}_{\chi,k}\big|^{2}-\rho\varrho^{*}_{k}\big|\sum_{\underset{i\ne k}{i \in\mathbb{K}_{\chi}}}^{}r^{i}_{\chi,k}\big|^{2}= \big|r^{k}_{\chi,k}\big|^{2}-\big|\sqrt{\rho\varrho^{*}_{k}}\sum_{\underset{i\ne k}{i \in\mathbb{K}_{\chi}}}^{}r^{i}_{\chi,k}\big|^{2}$. It is worth noting that the BS-STAR-RIS-$U_k$ cascaded channel gains are sorted as $\big|r^{1}_{\chi,1}\big|^{2}\le ...\le \big|r^{k}_{\chi,k}\big|^{2}\le ...\le\big|r^{k}_{\chi,K}\big|^{2} $, where this users' order can be always guaranteed by allocating the proper START-RIS elements for each user. 

Noting $ \zeta^{i,n}_{\chi} $ and $ \eta^{i,n}_{\chi,k} $ are independently Rayleigh distributed random variables (RVs) with a mean $\mathrm{E}\left[\zeta^{i,n}_{\chi}\right]=\mathrm{E}\left[\eta^{i,n}_{\chi,k}\right]=\frac{\sqrt{\pi}}{2}$ and a variance $\mathrm{VAR}\left[\zeta^{i,n}_{\chi}\right]=\mathrm{VAR}\left[\eta^{i,n}_{\chi,k}\right]=1-\frac{\pi}{4}$. Then, $\mathrm{E}\left[\zeta^{i,n}_{\chi}\eta^{i,n}_{\chi,k}\right]=\frac{\pi}{4}$ and $\mathrm{VAR}\left[\zeta^{i,n}_{\chi}\eta^{i,n}_{\chi,k}\right]=1-\frac{\pi^{2}}{16}$. According to the CLT, for $N^{k}_{\chi}>>1$, the cascaded channel ${{r}}^{k}_{\chi,k}$ converges to Gaussian RV. Thus, 
\begin{align}
	\label{rkk}
	{{r}}^{k}_{\chi,k} \sim \mathcal{N}\left(\frac{\pi}{4}\sqrt{L_{k}}N^{k}_{\chi},\left(1-\frac{\pi^{2}}{16}\right)L_{k}N^{k}_{\chi}\right).
\end{align}
In addition, for $N_{\chi}-N^{k}_{\chi}>>1$, $\sum_{i\ne k}^{K_{\chi}}r^{i}_{\chi,k}$ converges to Gaussian RV. So, 
\begin{align}
	\label{rii}
	\sum_{\underset{i\ne k}{i \in\mathbb{K}_{\chi}}}^{}r^{i}_{\chi,k}\sim \mathcal{CN}\left(0,L_{k}\left(N_{\chi}-N^{k}_{\chi}\right)\right).
\end{align} 
It is noticed that $\big|{{r}}^{k}_{\chi,k}\big|^{2}$ is a non-central chi-square RV with one degree of freedom and $\big|\sum_{\underset{i\ne k}{i \in\mathbb{K}_{\chi}}}^{}r^{i}_{\chi,k}\big|^{2}$ is a central chi-square RV with two degrees of freedom. Hence, $ {R}_{k} $ is the difference of a non-central and central independent chi-square RVs and by using its corresponding CDF $F_{{R}_{k}}\left(x\right)$ \cite{Chi}, the OP of the $U_{k}$ can be expressed in closed-form as in \eqref{OP_11}.
\section{Proof of Proposition 2}
\label{sec:ApendixB}	
By using \eqref{SINR-1}, the OP of the $U_{k}$ can be stated as
{\small
	\begin{align}
	\label{OP_1s} 
	OP_{k} &=P_{r}\left(\frac{\rho\Big|r^{k}_{\chi,k}\Big|^{2}a_{j}}{\rho\Big|r^{k}_{\chi,k}\Big|^{2}\sum_{j} +  1}<\gamma^{j}_{th} \right)\nonumber\\
	\
	&=P_{r}\left(\Big|r^{k}_{\chi,k}\Big|^{2}\left(\rho a_{j} - \rho\gamma^{j}_{th}\sum_{j}\right)<\gamma^{j}_{th} \right)\nonumber\\
	\
	&=P_{r}\left(\Big|r^{k}_{\chi,k}\Big|^{2}< \varrho^{*}_{k} \right)
	\nonumber\\
	\
	&=F_{\big|r^{k}_{\chi,k}\big|^{2}}\left(\varrho^{*}_{k}\right).
\end{align}}
Recall, $\big|\mathbf{{r}}^{k}_{\chi,k}\big|^{2}$ is a non-central chi-square RV with one degree of freedom and its corresponding CDF can be written as \cite{Chi}
\begin{align}
	\label{CDFrkunsorted}
	F_{\big|\mathbf{{r}}^{k}_{\chi,k}\big|^{2}}\left(x\right)&=1 - Q_{\frac{1}{2}}\left(\frac{\mu_{k}}{\nu_{k}},\frac{\sqrt{x}}{\nu_{k}}\right).
\end{align}
Then, by using \eqref{CDFrkunsorted}, the OP of the $U_{k}$ can be stated as in \eqref{OP_11-twouser}.

\begin{thebibliography}{31}

	\bibitem{liu2021}
	Y. Liu \emph{et al.}, \enquote{Reconfigurable intelligent surfaces: Principles and opportunities,} \emph{IEEE Commun. Surveys \& Tuts.}, vol. 23, no. 3, pp. 1546-1577, 3rd quarter 2021.	
	
	\bibitem{sb2021}
	S. Basharat \emph{et al.}, \enquote{Reconfigurable intelligent surfaces: Potentials, applications, and challenges for 6G wireless networks,} \emph{ IEEE Wireless Communications}, vol. 28, no. 6, pp. 184-191, Dec. 2021.	
	
	\bibitem{wu2020}	
	Q. Wu and R. Zhang, \enquote{Towards smart and reconfigurable environment: Intelligent reflecting surface aided wireless network,} \emph{IEEE Commun. Mag.}, vol. 58, no. 1, pp. 106-112, Jan. 2020.	

	\bibitem{eb2021}
	E. Basar and H. V. Poor, \enquote{Present and future of reconfigurable intelligent surface-empowered communications}, \emph{ IEEE Signal Process. Mag.}, vol. 38, no. 6, pp. 146-152, Nov. 2021.
	
	\bibitem{yua2021}	
	Y. Liu \emph{et al.}, \enquote{STAR: Simultaneous transmission and reflection for 360 coverage by intelligent surfaces,} \emph{ IEEE Wireless Communications}, vol. 28, no. 6, pp. 102-109, Dec. 2021.	

	\bibitem{yau2021}	
	Y. Liu \emph{et al.}, \enquote{Application of NOMA in 6G networks: Future vision and research opportunities for next generation multiple access,} Mar. 2021. [Online]. Available: \url{https://arxiv.org/abs/2103.02334v1}.	

	\bibitem{mah2018}
	M. Aldababsa \emph{et al.,} \enquote{A tutorial on non-orthogonal multiple access (NOMA) for 5G and beyond}, \emph{J. Wireless. Commun. Mobile Computing}, June 2018.

	\bibitem{noma2017}	
	S. Islam \emph{et al.}, \enquote{Power-domain non-orthogonal multiple access (NOMA) in 5G systems: Potentials and challenges,} \emph{IEEE Commun. Surveys \& Tuts.}, vol. 19, no. 2, pp. 721-742, 2nd quarter 2017.	
	
	\bibitem{star22021}	
	J. Xu \emph{et al.}, \enquote{STAR-RISs: Simultaneous transmitting and reflecting reconfigurable intelligent surfaces,} \emph{IEEE Commun. Lett.}, vol. 25, no. 9, pp. 3134-3138, Sep. 2021.	
	
	\bibitem{star42021}	
	X. Mu \emph{et al.}, \enquote{Simultaneously transmitting and reflecting (STAR) RIS aided wireless communications,} \emph{IEEE Trans. Wireless Commun.}, vol. 21, no. 5, pp. 3083-3098, May 2022.	
	
	\bibitem{star72021}	
	H. Niu \emph{et al.}, \enquote{Simultaneous transmission and reflection reconfigurable intelligent surface assisted MIMO systems,} June 2021. [Online]. Available: \url{https://arxiv.org/abs/2106.09450v1}.	
	
	\bibitem{star112021}	
	J. Xu \emph{et al.}, \enquote{Simultaneously transmitting and reflecting (STAR) intelligent omni-surfaces: Modeling and implementation,}  \emph{IEEE Trans. Veh. Technol}, vol. 17, no. 2, pp. 46-54, Jun. 2022.	

	\bibitem{star32021}	
	C. Wu \emph{et al.}, \enquote{Coverage characterization of STAR-RIS networks: NOMA and OMA,} \emph{IEEE Commun. Lett.}, vol. 25, no. 9, pp. 3036-3040, Sep. 2021.

	
	\bibitem{star52021}	
	T. Hou \emph{et al.}, \enquote{A joint design for STAR-RIS enhanced NOMA-CoMP networks: A simultaneously-signal-enhancement-and-cancellation-based (SSECB) design,} \emph{IEEE Trans. Veh. Technol.}, vol. 71, no. 1, pp. 1043-1048, Jan. 2022.

	\bibitem{star62021}	
	J. Zuo \emph{et al.}, \enquote{Joint design for simultaneously transmitting and reflecting (STAR) RIS assisted NOMA systems,} {\it IEEE Trans. Wireless Commun. (Early Access)}, doi: 10.1109/TWC.2022.3197079..	

	\bibitem{star82021}	
	W. Ni \emph{et al.}, \enquote{STAR-RIS enabled heterogeneous networks: Ubiquitous NOMA communication and pervasive federated learning,} June 2021. [Online]. Available: \url{https://arxiv.org/abs/2106.08592v2}.	
	
	\bibitem{star92021}	
	C. Zhang \emph{et al.}, \enquote{STAR-IOS aided NOMA networks: Channel model approximation and performance analysis,} \emph{ IEEE Trans. Wireless Commun. (Early Access)}, doi: 10.1109/TWC.2022.3152703.	

	\bibitem{star102021}	
	Z. Xie \emph{et al.}, \enquote{STAR-RIS aided NOMA in multi-cell networks: A general analytical framework with Gamma distributed channel modeling,} \emph{ IEEE Trans. Commun. (Early Access)}, doi: 10.1109/TCOMM.2022.3186409.	
 
 	\bibitem{Correlated-TR}	
 J. Xu, Y. Liu, X. Mu, R. Schober, and H. V. Poor, \enquote{STAR-RISs: A correlated T\&R phase-shift model and practical phase-shift configuration strategies,} \emph{IEEE J. Sel. Topics in Signal Processing (Early Access)}, doi: 10.1109/JSTSP.2022.3175030.	
 
 	\bibitem{Coupled-PS}	
 Y. Liu, X. Mu, R. Schober, and H. V. Poor, \enquote{Simultaneously transmitting and reflecting (STAR)-RISs: A coupled phase-shift model,} {\it ICC 2022 - IEEE Int. Conf. Commun.}, pp. 2840-2845. May 2022. doi: 10.1109/ICC45855.2022.9838767..	
 
  	\bibitem{STAR-Performance}	
X. Yue, J. Xie, Y. Liu, Z. Han, R. Liu and Z. Ding, \enquote{Simultaneously transmitting and reflecting reconfigurable intelligent surface assisted NOMA networks,} {\it IEEE Trans. Wireless Commun (Early Access).}. 10.1109/TWC.2022.3192211.

  	\bibitem{E-Efficiency}	
Y. Guo, F. Fang, D. Cai, and Z. Ding, \enquote{Energy-efficient design for a NOMA assisted STAR-RIS network with deep reinforcement learning,} {\it IEEE Trans. Veh. Tech., (Early Access)} doi: 10.1109/TVT.2022.3224926.	
 
   	\bibitem{Secrecy}	
 Z. Zhang, J. Chen, Y. Liu, Q. Wu, B. He, and L. Yang, \enquote{On the secrecy design of STAR-RIS-assisted uplink NOMA networks,} {\it IEEE Trans. Wireless Commun.}, vol. 21, no. 12, pp. 11207-11221, Dec. 2022.	
 
    	\bibitem{R-Allocation}	
C. Wu, X. Mu, Y. Liu, X. Gu, and X. Wang, \enquote{Resource allocation in STAR-RIS-aided networks: OMA and NOMA,} \emph{IEEE Trans. Wireless Commun. (Early Access)}, doi: 10.1109/TWC.2022.3160151.
 
   	\bibitem{Uplink}	
J. Zuo, Y. Liu, Z. Ding, and X. Wang, \enquote{Uplink NOMA for STAR-RIS networks,} Oct. 2021. [Online]. Available: \url{https://arxiv.org/abs/2110.05686v2.}	

 	\bibitem{star-ber}
M. Aldababsa, A. Khaleel and E. Basar, \enquote{STAR-RIS-NOMA networks: An error performance perspective,} \emph{IEEE Commun. Lett. (Early Access)}. doi: 10.1109/LCOMM.2022.3179731.

\bibitem{starnoma+}
H. Ma, H. Wang, H. Li and Y. Feng, \enquote{Transmit Power Minimization for STAR-RIS-Empowered Uplink NOMA System,} \emph{IEEE Wire. Commun. Lett.}, vol. 11, no. 11, pp. 2430-2434, Nov. 2022.

\bibitem{starnoma++}
J. Chen and X. Yu, \enquote{Ergodic Rate Analysis and Phase Design of STAR-RIS Aided NOMA With Statistical CSI,} \emph{IEEE Commun. Lett.}, vol. 26, no. 12, pp. 2889-2893, Dec. 2022.

\bibitem{starnoma++A}
B. Zhao, C. Zhang, W. Yi and Y. Liu, \enquote{Ergodic Rate Analysis of STAR-RIS Aided NOMA Systems,} \emph{IEEE Commun. Lett.}, vol. 26, no. 10, pp. 2297-2301, Oct. 2022.

\bibitem{starnoma+++}
H. Liu, G. Li, X. Li, Y. Liu, G. Huang and Z. Ding, \enquote{Effective Capacity Analysis of STAR-RIS-Assisted NOMA Networks,}\emph{ IEEE Wire. Commun. Lett.}, vol. 11, no. 9, pp. 1930-1934, Sept. 2022.

\bibitem{starnoma++++}
Z. Zhang, J. Chen, Y. Liu, Q. Wu, B. He and L. Yang, \enquote{On the Secrecy Design of STAR-RIS Assisted Uplink NOMA Networks,} \emph{IEEE Trans. Wire. Commun.}, vol. 21, no. 12, pp. 11207-11221, Dec. 2022.

\bibitem{starnoma+++++}
J. Zhu, P. Gao, G. Chen, P. Xiao and A. Quddus, \enquote{Index Modulation for STAR-RIS Assisted NOMA System,} \emph{IEEE Commun. Lett. (Early Access)}, 2022, doi: 10.1109/LCOMM.2022.3223968.

\bibitem{starnoma++++++}
F. Fang, B. Wu, S. Fu, Z. Ding and X. Wang, \enquote{Energy-Efficient Design of STAR-RIS Aided MIMO-NOMA Networks,} \emph{IEEE Trans. Commun. (Early Access)}, 2022, doi: 10.1109/TCOMM.2022.3223706.

\bibitem{part1}
Z. -Q. He and X. Yuan, \enquote{Cascaded Channel Estimation for Large Intelligent Metasurface Assisted Massive MIMO,} \emph{IEEE Wire. Commun. Lett.}, vol. 9, no. 2, pp. 210-214, Feb. 2020.

\bibitem{part2}
H. Liu, X. Yuan and Y. -J. A. Zhang, \enquote{Matrix-Calibration-Based Cascaded Channel Estimation for Reconfigurable Intelligent Surface Assisted Multiuser MIMO,} \emph{IEEE J. Sel. Areas Commun.}, vol. 38, no. 11, pp. 2621-2636, Nov. 2020.

\bibitem{part3}
E. Basar, \enquote{Reconfigurable Intelligent Surface-Based Index Modulation: A New Beyond MIMO Paradigm for 6G,} \emph{IEEE Trans. Commun.}, vol. 68, no. 5, pp. 3187-3196, May 2020.

\bibitem{part4}
J. Yuan, M. Wen, Q. Li, E. Basar, G. C. Alexandropoulos and G. Chen, \enquote{Receive Quadrature Reflecting Modulation for RIS-Empowered Wireless Communications,} \emph{IEEE Trans. Veh. Techn.}, vol. 70, no. 5, pp. 5121-5125, May 2021.



 
\bibitem{RISpartitioning}	
     A. Khaleel and E. Basar, \enquote{A novel NOMA solution with RIS partitioning,} \emph{IEEE J. Sel. Topics in Signal Processing.}, vol. 16, no. 1, pp. 70-81, Jan. 2022.
\bibitem{ch-est}		
	A. L. Swindlehurst, G. Zhou, R. Liu, C. Pan and M. Li,  \enquote{Channel estimation with reconfigurable intelligent surfaces--A general framework,} in {\it Proc. IEEE}, doi: 10.1109/JPROC.2022.3170358.

\bibitem{ch-est1}L. Wei, C. Huang, G. C. Alexandropoulos, C. Yuen, Z. Zhang and M. Debbah, \enquote{Channel Estimation for RIS-Empowered Multi-User MISO Wireless Communications,} {\it IEEE Trans. Commun.}, vol. 69, no. 6, pp. 4144-4157, June 2021, doi: 10.1109/TCOMM.2021.3063236.

\bibitem{ch-est2}L. Wei, C. Huang, G. C. Alexandropoulos and C. Yuen, \enquote{Parallel Factor Decomposition Channel Estimation in RIS-Assisted Multi-User MISO Communication,} 2020 IEEE 11th Sensor Array and Multichannel Signal Processing Workshop (SAM), 2020, pp. 1-5, doi: 10.1109/SAM48682.2020.9104305.
	
	\bibitem{Chi}
	M. Simon, \emph{Probability Distributions Involving Gaussian Random Variables,}	New York, NY, USA: Springer, 2002.

	\bibitem{F-NOMA}
Z. Ding, P. Fan and H. V. Poor, \enquote{Impact of User Pairing on 5G Nonorthogonal Multiple-Access Downlink Transmissions,} \emph{IEEE Trans. Veh. Technol.}, vol. 65, no. 8, pp. 6010-6023, Aug. 2016..

   \bibitem{phs-errs}
    M.-A. Badiu and J. P. Coon,  \enquote{Communication through a large reflecting surface with phase errors,} \emph{IEEE Wireless Commun. Lett.}, vol. 9, no. 2, pp. 184–188, Feb. 2020.

   \bibitem{spat-corr}
    E. Björnson and L. Sanguinetti, \enquote{Rayleigh fading modeling and channel hardening for reconfigurable intelligent surfaces,} \emph{IEEE Wireless Commun. Lett.}, vol. 10, no. 4, pp. 830–834, Apr. 2021.

	\bibitem{alpha}
	Q. Wu and R. Zhang, \enquote{Intelligent reflecting surface enhanced wireless network via joint active and passive beamforming,} \emph{IEEE Trans. Wireless. Commun.}, vol. 18, no. 11, pp. 5394-5409, Nov. 2019.
	
    \bibitem{path_exp}
	W. Tang et al., \enquote{Wireless communications with reconfigurable intelligent surface: Path loss modeling and experimental measurement,} \emph{IEEE Trans. Wireless Commun.}, vol. 20, no. 1, pp. 421-439, Jan. 2021.
	
\end{thebibliography}
\end{document}